\pdfoutput=1

\documentclass[12pt]{article}

\newcommand{\blind}{0}

\addtolength{\oddsidemargin}{-.5in}%
\addtolength{\evensidemargin}{-1in}%
\addtolength{\textwidth}{1in}%
\addtolength{\textheight}{1.8in}%
\addtolength{\topmargin}{-1in}%

\usepackage{amsfonts}
\usepackage{amsmath}
\usepackage{amssymb}
\usepackage{amsthm}
\usepackage{booktabs}
\usepackage{caption}
\usepackage{colortbl}
\usepackage{graphicx}
\usepackage[colorlinks, linkcolor=blue, citecolor=blue, urlcolor=blue, bookmarks=false, hyperfootnotes=false]{hyperref}
\usepackage{mathtools}
\usepackage{multibib}
\usepackage{multirow}
\usepackage{natbib}
\usepackage{setspace}
\usepackage[compact]{titlesec}
\usepackage{titling}

\captionsetup[figure]{font=footnotesize,labelfont=footnotesize}
\captionsetup[table]{font=footnotesize,labelfont=footnotesize}

\setcounter{MaxMatrixCols}{10}

\definecolor{Gray}{gray}{0.9}

\newtheorem{theorem}{Theorem}

\newtheorem{assumption}{Assumption}
\theoremstyle{definition}

\theoremstyle{remark}
\newtheorem{remark}{Remark}
\newtheorem*{remark*}{Remark}

\DeclareMathOperator*{\argmax}{\arg\!\max}

\renewcommand{\appendixname}{Appendix}
\setcitestyle{authoryear,open={(},close={)},aysep={}}
\newcites{supp}{Supplementary References}

\begin{document}

\def\spacingset#1{\renewcommand{\baselinestretch}%
{#1}\small\normalsize} \spacingset{1}

\newcommand{\papertitle}{\bf The Impact of Sampling Variability on Estimated Combinations of Distributional Forecasts\protect\footnotemark[1]}
\newcommand{\paperauthors}{Ryan Zischke\protect\footnotemark[2] \protect\footnotemark[3] \protect\footnotemark[4], Gael M. Martin\protect\footnotemark[2], David T. Frazier\protect\footnotemark[2], D. S. Poskitt\protect\footnotemark[2]}
\newcommand{\funding}{This research has been supported by Australian Research Council (ARC) Discovery Grant DP200101414. Frazier was also supported by ARC Early Career Researcher Award DE200101070; and Martin and Frazier were provided support by the ARC Centre of Excellence in Mathematics and Statistics.}
\newcommand{\ebsaffiliation}{Department of Econometrics and Business Statistics, Monash University.}
\newcommand{\mdaffiliation}{Methodology Division, Australian Bureau of Statistics. The views expressed in this paper are those of the authors alone, and do not in any way represent the Methodology Division, the Australian Bureau of Statistics, the Australian Public Service, or the Australian Government.}
\newcommand{\correspondingauthor}{Corresponding author \\ \indent\ \ \textit{E-mail address:} ryan.zischke@monash.edu.}

\if0\blind
{
  \renewcommand{\thefootnote}{\fnsymbol{footnote}}
  \title{\papertitle}
  \date{June 6, 2022}
  \author{\paperauthors}
  \maketitle
  \footnotetext[1]{\funding}
  \footnotetext[2]{\ebsaffiliation}
  \footnotetext[3]{\mdaffiliation}
  \footnotetext[4]{\correspondingauthor}
  \setcounter{footnote}{0}
  \renewcommand{\thefootnote}{\arabic{footnote}}
} \fi

\if1\blind
{
  \spacingset{1.5}
  \bigskip
  \bigskip
  \bigskip
  \begin{center}
    {\Large\bf The Impact of Sampling Variability on Estimated Combinations of Distributional Forecasts}
  \end{center}
  \smallskip
  \spacingset{1}
} \fi

\begin{abstract}
We investigate the performance and sampling variability of estimated forecast combinations, with particular attention given to the combination of forecast \textit{distributions}. Unknown parameters in the forecast combination are optimized according to criterion functions based on proper scoring rules, which are chosen to reward the form of forecast accuracy that matters for the problem at hand, and forecast performance is measured using the out-of-sample expectation of said scoring rule. Our results provide novel insights into the behavior of estimated forecast combinations. Firstly, we show that, {asymptotically}, the sampling variability in the performance of standard forecast combinations is determined {solely} by estimation of the constituent models, with estimation of the combination weights contributing no sampling variability whatsoever, at first order. Secondly, we show that, if computationally feasible, forecast combinations produced in a single step -- in which the constituent model and combination function parameters are estimated jointly -- have superior {predictive }accuracy and lower sampling variability than standard forecast combinations -- where constituent model and combination function parameters are estimated in two steps. These theoretical insights are demonstrated numerically, both in simulation settings and in an extensive empirical illustration using a time series of S\&P500 returns.
\end{abstract}

\noindent{\it Keywords:}  Forecast Combination, Forecast Combination Puzzle, Probabilistic Forecasting, Scoring Rules, S\&P500 Forecasting, Two-Stage Estimation \if0\blind { \\ } \fi

\if0\blind
{
  \noindent{\it JEL Classification:} C18, C51, C53 \\
} \fi

\spacingset{1.5}

\section{Introduction}

Since the publication of the seminal papers by \cite{Stone1961} and \cite{Bates1969}, the forecasting literature has seen an explosion of interest in the production of point and/or distributional forecasts via weighted combinations of forecasts from distinct models (see \citealp{Timmermann2006}, \citealp{Aastveit2019}, and \citealp{Wang2022} for relevant reviews). Forecast combinations (of both types) have attracted such attention in part because of their highly competitive performance out-of-sample (\citealp{Makridakis2018, Makridakis2020}; \citealp{Thorey2018}; \citealp{Wang2018}; \citealp{Taylor2020}). At the same time, however, there remains considerable disagreement about the relative merits of different combination approaches, including the role played by estimation error in the performance of such methods.

Two popular methods of producing \textit{distributional} forecast combinations, for instance, are the linear opinion pool (\citealp{Stone1961}; \citealp{Hall2007}; \citealp{Geweke2011}; \citealp{Opschoor2017}; \citealp{Martin2021}) and the beta-transformed linear opinion pool (\citealp{Ranjan2010}; \citealp{Gneiting2013}; \citealp{Satopaeae2014}; \citealp{Baran2018}). One popular method of producing \textit{point} forecast combinations is by taking a weighted average of the constituent forecasts (\citealp{Bates1969}; \citealp{Stock2004}; \citealp{Timmermann2006}; \citealp{Smith2009}; \citealp{Claeskens2016}). All three of these approaches are examples of \textit{combination functions}, which are (often parameterized) functions that produce a single forecast using forecasts from constituent models, for each future time point.

Given a combination function, one is confronted with the choice of whether to estimate the weight given to each constituent model, or to fix the weights at given values, usually a vector of values that assigns equal weight to each model. Experimentation with equally weighted combinations, and those that are ``optimized'' according to some criterion, has revealed that choosing the weights to optimize some reward function does not necessarily lead to better out-of-sample performance in practice, as measured by that reward function \citep{Clemen1989}; a finding known colloquially as the ``forecast combination puzzle''. Empirical studies have provided support for both {possible }positions. {The analysis of \cite{Stock2004} and \cite{Smith2009} suggest that estimating forecast combination weights has little impact on the accuracy of forecast combinations, with the equally weighted combinations marginally outperforming combinations with optimized weights. On the other hand, there is also empirical evidence to suggest that optimal weights can outperform the equally weighted combination (e.g.\ \citealp{Genre2013}; \citealp{Hsiao2014}; \citealp{Martin2021}).

Investigations have, to our knowledge, only focused on the impact of combination weight estimation (e.g.\ \citealp{Smith2009}; \citealp{Claeskens2016}; \citealp{Elliott2017}; \citealp{Chan2018}), with the impact of sampling variability due to the estimation of the constituent model parameters largely being neglected. In this paper, we explore the impact of sampling variability in forecast combinations, comprising variability in estimates of the constituent model \textit{and} combination parameters, on the accuracy of combinations of \textit{distributional forecasts}. Herein, we measure the accuracy of distributional forecast combinations through expected (proper) scoring rules (\citealp{Gneiting2007}), and rigorously analyze how sampling variability in the full vector of estimated parameters impacts (distributional) forecast accuracy of estimated combinations.

Our first key finding is that, in general contexts, and under weak regularity conditions, estimation of the combination weights imparts no bias or variability into the performance of standard forecast combinations, at least as far as first order asymptotic behavior is concerned. This finding lies in opposition to the analysis of \cite{Claeskens2016}, who show that in certain stylized scenarios estimation of the weights imparts additional bias and variance into the forecasts, but fits squarely within the findings of \cite{Smith2009}, who argue that the {failure of the optimized combination to perform best} is due to finite-sample error. This key result is used to demonstrate that the sampling variability of our chosen forecast accuracy measure depends only on the variability of the estimated constituent model parameters, and underscores the significant role played by the estimated constituent models used in forecast combinations. Since the variability in the estimated constituent models is often much larger than the sampling variability of the estimated weights, our results suggest that not incorporating the variability of constituent model estimates within the analysis of forecast combinations may lead to conclusions that have little practical relevance.

The second key contribution made in this paper is to compare and contrast the behavior of forecast combinations that have been produced in the standard manner, against forecast combinations that are produced in a single step. Historically, forecast combinations have been produced by first producing (estimated) forecast distributions from the constituent models, and then estimating, or fixing, the corresponding combination weights. This two-step procedure allows us to view commonly applied forecast combination approaches as \textit{two-stage} extremum estimators (\citealp{Pagan1986}; \citealp{Newey1994}; \citealp{Frazier2017}), whereby the constituent models are estimated in the first stage, and the combination parameters are estimated in the second stage, conditional on the first stage estimates. In contrast, if the estimated weights and constituent model parameters were to be estimated in a single step, the resulting (vector) parameter estimator could be viewed as a \textit{one-stage} estimator, defined by a joint optimization program. Our results demonstrate that, when feasible, one-stage forecast combinations are preferable to standard two-stage forecast combinations. In particular, forecast combinations produced in a single step have higher accuracy, and lower sampling variability, than those produced in the standard two-stage manner, as measured by our chosen forecast performance measure.\footnote{\scriptsize We acknowledge the extensive Bayesian literature on combining forecast distributions (e.g.\ \citealp{Billio2013}; \citealp{Casarin2015}; \citealp{Casarin2016}; \citealp{Pettenuzzo2016}; \citealp{Aastveit2018}; \citealp{Bassetti2018}; \citealp{Bastuerk2019}; \citealp{Casarin2019}; \citealp{McAlinn2019}; \citealp{Loaiza-Maya2021}). The Bayesian treatment of this problem is, of course, fundamentally different from the frequentist approach adopted here.}

The paper proceeds as follows. In Section \ref{sec:accuracy} we introduce generic notation for the distributional forecast combinations we consider, and discuss how to assess the performance of these combinations. Section \ref{sec:implications} uses this framework to demonstrate that the sampling variability of standard (two-stage) forecast combination methods is entirely driven by the sampling variability in the constituent models, with no contribution from estimation of the weights; furthermore, we show that one-stage combination methods have superior forecasting performance relative to the standard forecast combination. In Section \ref{sec:montecarlo}, we illustrate all findings numerically in a simulation setting, using a distributional forecast combination based on two constituent models. Asymptotic results on forecast performance are shown to apply in a finite-sample empirical setting in Section \ref{sec:emp}, using a time series of daily S\&P500 returns from January 5th, 1988 to December 31st, 2021, with the dominance of the one-stage estimator seen to be robust to an increase in volatility. It is also shown that optimizing for forecast accuracy in the tails is of benefit during a period of high volatility, no matter what measure of out-of-sample forecast accuracy is used. Section \ref{sec:conclusion} concludes. Proofs for all \if0\blind{theoretical\ }\fi results can be found in the supplementary appendix\if0\blind{, and software for reproducing numerical results is available at \href{https://github.com/zisc/samp-var-comb}{https://github.com/zisc/samp-var-comb}}\fi.

The following notation is used throughout the paper and the supplementary appendix. Given variables $y_{1},y_{2},\ldots ,y_{t}$, we denote by $y_{1:t}$ the column vector $[y_{1}\ y_{2}\ \cdots \ y_{t}]^{\prime }$. For a sequence $a_{n}$ converging to zero, the terms $\mathcal{O}_{p}(a_{n})$ and $o_{p}(a_{n})$ are used to describe the convergence of a random variable relative to $a_{n}$; see \cite{VanDerVaart1998} for a textbook treatment. The symbols $\Rightarrow $ and $\overset{p}{\rightarrow }$ denote convergence in distribution and convergence in probability, respectively, and $\mathrm{plim}_n X_n$ denotes the probability limit as $n \to \infty$ of a random sequence $X_n$. A dot is used to represent \textquotedblleft the function whose argument replaces the dot\textquotedblright . For example, $f=g(\,\cdot \,,y)$ defines a function $f(x)=g(x,y)$ for some fixed $y$.

\section{Accuracy of Distributional Forecast Combinations\label{sec:accuracy}}

We first discuss {the use of scoring rules to} assess the accuracy of distributional forecasts (Section \ref{subsec:score}), then formalize the class of distributional forecast combinations (Section \ref{subsec:comb}), and discuss practical implementation details (Section \ref{subsec:produce}), before describing the precise manner in which we measure the forecast performance of estimated combinations (Section \ref{subsec:assess}).

\subsection{Scoring Rules\label{subsec:score}}

We follow \cite{Gneiting2007} and measure the accuracy of distributional forecasts using scoring rules. A \textit{scoring rule} (or \textit{score}) $S(F,y)$ measures the accuracy of a predictive cumulative distribution function (CDF) $F$, when the variable we are trying to predict, $Y$, achieves realization $Y=y$. All scoring rules in our analysis are taken to be positively orientated, so that $S(F,y)$ can be interpreted as the reward the forecaster achieves for quoting distribution $F$ when the realization $Y=y$ is observed.

Further, we consider that the scoring rules $S$ are \textit{strictly proper}: for a random variable $Y$, with true distribution function $H=\mathrm{Pr}(Y \leq \cdot \,)$, a scoring rule is proper if the expected score under $H$, $\mathbb{E}_{H}[S(F, Y)]=\int S(F,y)\mathrm{d}H$, is maximized at $F=H$; and the score is called \textit{strictly proper} if it is also maximized nowhere else. Throughout the remainder we take $S(\cdot,\cdot)$ to represent an arbitrary scoring rule that is strictly proper, and positively-oriented. See \cite{Gneiting2007} for a thorough introduction to use of scoring rules in measuring the accuracy of forecast distributions.

Arguably the most well-known scoring rule is the logarithmic score (log score) (\citealp{Hall2007}; \citealp{Geweke2011}), which is defined by 
\begin{equation}
S^{\mathrm{LS}}(F,y) = \log f(y),  \label{eqn:simsl}
\end{equation}
where $f$ is the probability density function (PDF) of the CDF $F$. Another popular scoring rule is the censored log score (\citealp{Diks2011}; \citealp{Opschoor2017}), which prioritizes accurate forecasts in a region of the support $B$, potentially at the expense of forecast accuracy on $B^c$, the region outside $B$. The censored log score is defined by 
\begin{equation}
S^{\mathrm{CS}_B}(F, y) = 
\begin{cases}
\log f(y) & y \in B \\ 
\log \left( \int_{B^c} f(y)\ \mathrm{d}y \right) & y \in B^c
\end{cases}.  \label{eqn:simscls}
\end{equation}
Both of these scoring rules feature in our numerical work in Sections \ref{sec:montecarlo} and \ref{sec:emp}.

\subsection{Defining Distributional Forecast Combinations\label{subsec:comb}}

In many cases, the forecaster entertains several possible models that can be used to predict a variable of interest. Rather than choosing a single model, she can combine several models to form a distributional forecast combination: a distributional forecast combination is formed from the combination of several constituent forecast distributions. Distributional forecast combinations are thus formed from two pieces: the (predictive) distributions from the constituent models, and the method by which the distributions are combined, i.e.\ the combination function.

Consider $Y_{1},\dots ,Y_{t}$ generated from the probability triplet $(\Omega ,\mathcal{A},H)$; given $Y_{1:t-1}$, our goal is to produce a forecast for the distribution of the random variable $Y_{t}|Y_{1:t-1}$. Let $F_{1}(\,\cdot \mid Y_{1:t-1};\gamma _{1})$ denote a predictive CDF for $Y_{t}|Y_{1:t-1}$ that depends on unknown parameters $\gamma _{1}\in \Gamma_{1}\subseteq \mathbb{R}^{d_{\gamma _{1}}}$.\footnote{\scriptsize To apply the results of the current paper to $h$-step-ahead forecasts, simply replace \textquotedblleft $\scriptstyle{1:t-1}$\textquotedblright\ with \textquotedblleft $\scriptstyle{1:t-h}$\textquotedblright\ throughout.} We use the summary notation $F_{1,t}^{\gamma _{1}} \equiv F_{1}(\,\cdot \mid Y_{1:t-1};\gamma _{1})$ hereafter.

While the practitioner can use $F_{1,t}^{\gamma _{1}}$ to produce a forecast distribution for $Y_{t}|Y_{1:t-1}$, in the majority of forecasting settings there exists a collection of $K$ possible models that can be used to produce forecast distributions. We suppose that each of these forecast distributions is indexed by their constituent model parameters, $\gamma _{1},\dots ,\gamma_{K}$, with $\gamma _{j}\in \Gamma _{j}\subseteq \mathbb{R}^{d_{\gamma_{j}}},j=1,2,\ldots ,K$, which gives the practitioner access to $K$ forecast distributions $\{F_{1,t}^{\gamma _{1}},\dots ,F_{K,t}^{\gamma _{K}}\}$ to choose amongst. Rather than adopt just one of these models, we can consider combining them to produce a single forecast distribution. In this way, we can stack the unknown parameters into a single vector that contains all the parameters from the constituent models: $\gamma \in \Gamma \coloneqq\Gamma_{1}\times \Gamma _{2}\times \cdots \times \Gamma _{K}$.

Following \cite{Gneiting2013}, we consider a combination function $Q_{\eta}$ indexed by the parameter vector $\eta \in \mathcal{E}\subseteq \mathbb{R}^{d_{\eta }}$. Collecting the unknown parameters as $\theta = [\eta^{\prime}\ \gamma^{\prime}]^{\prime} \in \Theta \coloneqq \mathcal{E} \times \Gamma$, at each time-period $t$, the combination function $Q_{\eta }$ produces a single one-step-ahead predictive distribution $F_{c,t}^{\theta} \equiv F_{c}(\,\cdot \mid Y_{1:t-1};\eta, \gamma)$, where $F_{c}$ denotes the composition of $Q_{\eta }$ and the $K$ predictive distributions $F_{j,t}^{\gamma _{j}},\ j=1,2,\ldots ,K$ from the constituent models under consideration. As noted earlier, and with selected references, common choices for the combination function include the linear pool and the beta-transformed linear pool, with the vector $\eta$ comprising the familiar set of weights on the unit simplex in the former case. At no stage do we assume that the true data generating process (DGP) is spanned by the forecast combination.

\subsection{Producing Forecast Combinations\label{subsec:produce}}

The single predictive CDF $F_{c,t}^{\theta}$, which is the composition of $Q_{\eta}$ and $F_{1,t}^{\gamma_{1}},\ldots,F_{K,t}^{\gamma_{K}}$, can be interpreted as a statistical model, $P_{\theta}$, indexed by an unknown parameter vector $\theta \in \Theta$, which we use to produce a predictive distribution for $Y_{t}$ given a random sample $Y_{1}, Y_{2}, \ldots, Y_{t-1}$. With this construction, we aim to choose parameter estimates $\hat{\theta}_{n} \equiv \hat{\theta}_{n}(Y_{1}, Y_{2}, \ldots, Y_{n})$ using the observed data so that $P_{\hat{\theta}_{n}}$ is, in some sense, a good approximation to the true unknown distribution of $Y_{t}|Y_{1:t-1}$.

A clear choice then is to follow \cite{Gneiting2007} (see also \citealp{Martin2021}), and produce $\hat{\theta}_{n}$ by maximizing the in-sample average scoring rule in which we wish to measure the accuracy of our predictive distributions: for $\mathcal{S}_{n}(\theta) = \frac{1}{n} \sum_{t=1}^{n} S(F_{c,t}^{\theta}, Y_{t})$, the unknown parameters can be estimated as 
\begin{equation}
\hat{\theta}_{n} \coloneqq \argmax_{\theta \in \Theta} \mathcal{S}_{n}(\theta). \label{eqn:est1s}
\end{equation}
Once $\hat{\theta}_{n}$ has been obtained, the forecast combination function $F_{c,n}^{\hat{\theta}_{n}}$ can be used to produce a forecast distribution for the random variable $Y_{n+1}$. Such an approach has been referred to as an ``optimal prediction approach'' by \cite{Gneiting2007} since maximizing the in-sample criterion we wish to evaluate should, \textit{a priori}, produce forecast distributions that attain high accuracy in that chosen score.

In the case of a predictive combination, however, the dimensionality of $\theta$ is often high. Hence, and in the spirit of the general forecast combination literature (e.g.\ \citealp{Hall2007}; \citealp{Geweke2011}; \citealp{Gneiting2013}), the most common approach to producing forecast combinations is to proceed in two steps (e.g.\ \citealp{Clemen1989}; \citealp{Stock2004}; \citealp{Geweke2011}; \citealp{Genre2013}; \citealp{Makridakis2020}; \citealp{Martin2021}); first, estimate the parameters of the constituent models, $\gamma_{1}, \ldots, \gamma_{K}$, then condition on these estimates to estimate the combination parameters, $\eta$.

More particularly, given our sample $Y_{1:n}$, we consider that the two-stage forecast distribution is produced using an estimator for $\theta$ that is constructed in two steps. First, each $\gamma_j$, $j = 1, 2, \ldots, K$ is estimated by maximizing the in-sample average scoring rule: $\tilde{\gamma}_{j,n} \coloneqq \argmax_{\gamma_j \in \Gamma_j} \frac{1}{n} \sum_{t = 1}^{n} S(F^{\gamma_j}_{j,t}, Y_{t})$, and we collect $\tilde{\gamma}_{j,n}$, $j = 1, \ldots, K$, as $\tilde{\gamma}_{n} \coloneqq [\tilde{\gamma}_{1,n}^{\prime}\ \cdots\ \tilde{\gamma}_{K,n}^{\prime}]^{\prime}$. In the second stage, the combination parameters $\eta$ are estimated via
\begin{equation}
\tilde{\eta}_n \coloneqq \argmax_{\eta \in \mathcal{E}} \mathcal{S}_n(\eta, \tilde{\gamma}_n),\quad \mathcal{S}_{n}(\theta) \equiv \mathcal{S}_{n}(\eta, \gamma) \coloneqq \frac{1}{n} \sum_{t = 1}^n S(F^{\theta}_{c,t}, Y_{t}),
\label{eqn:est2s1}
\end{equation}
and the estimated parameters are stacked as 
\begin{equation}
\tilde{\theta}_n \coloneqq [\tilde{\eta}_n^{\prime}\ \tilde{\gamma}_n^{\prime}]^{\prime}.  \label{eqn:est2s2}
\end{equation}

\begin{remark}
For treatments of two-stage estimation, see e.g.\ \cite{Pagan1986}; \cite{Newey1994}; \cite{Frazier2017}. To paraphrase \cite{Pagan1986}, in the context of predictive combinations we can view the parameters that underlie the constituent predictive models as nuisance parameters and note that \textquotedblleft estimation would generally be easy if the nuisance parameters were known.\textquotedblright\ In this setting, our chosen strategy for dealing with these nuisance parameters is to replace each of them by some value that is estimated from the data. As we shall see later, this seemingly innocuous practice increases the sampling variability of the forecast performance of our estimated model $P_{\tilde{\theta}_{n}}$.
\end{remark}

We will refer to forecast distributions produced through a single estimation step as one-stage forecast combinations in order to distinguish them from standard (two-stage) forecast combinations.

\subsection{Assessing Accuracy of Forecast Combinations\label{subsec:assess}}

In the case of single models, e.g., $F_{j,t}^{\gamma _{j}}$, accuracy of forecast combinations can be measured through the limiting expected average score 
\begin{equation}
\lim_{n\rightarrow \infty }\mathbb{E}_{H}\left[ \frac{1}{n} \sum_{t=1}^{n}S(F_{j,t}^{\gamma _{j}},Y_{t})\right].  \label{eqn:indiv}
\end{equation}
In the case of forecast combinations, accuracy of forecast distributions can be measured via 
\begin{equation}
\mathcal{S}_{0}(\theta) \coloneqq \lim_{n\rightarrow \infty} \mathbb{E}_{H} \left[ \mathcal{S}_{n}(\theta) \right],\ \mathcal{S}_{0}(\theta) \equiv \mathcal{S}_{0}(\eta, \gamma),  \label{eqn:S0}
\end{equation}
with $\mathcal{S}_{n}(\theta )$ as defined in \eqref{eqn:est2s1}, and where we remind the reader that $\mathbb{E}_{H}$ denotes expectation under the true, unknown, distribution $H$ of $Y_{1}, \dots, Y_{n}, \dots$. For the remainder of the paper we drop the subscript for brevity, and note that all expectations and variances are to be taken with respect to the true distribution $H$.

Rather than attempting to draw conclusions using a conventional in-sample versus out-of-sample split, we instead consider the performance of forecast combinations through the theoretically important construct of $\mathcal{S}_{0}(\theta)$.\footnote{While, strictly speaking, $\mathcal{S}_{0}$ is infeasible to construct in practice, it can be estimated consistently (as $n \rightarrow \infty $) using a sequence of out-of-sample forecast evaluations.} We choose this approach for measuring forecast performance for two main reasons: firstly, it obviates the need to choose the {in-sample/out-of-sample} evaluation mechanism, which has important ramifications for the performance of forecast evaluation and {which}, depending on the sample splitting, may ultimately obscure the sampling variability of the forecast combinations; secondly, this approach will allow us to easily link the sampling variability of different forecast combination schemes to their performance.

\section{Implications for Forecast Performance\label{sec:implications}}

In this section, we show that one-stage combinations have superior forecast performance over standard two-stage combinations, and that this performance measure also has less sampling variability in the one-stage case. For the sake of brevity, and to keep technical details to a minimum, we only present the key results. Regularity conditions, technical discussion of said conditions, and proofs of all stated results are collected in the supplementary appendix.

As discussed in Section \ref{subsec:assess}, we measure forecast performance via $\mathcal{S}_{0}(\theta)$. Therefore, the forecasting performance associated with the one- and two-stage forecast combinations can be assessed via $\mathcal{S}_{0}(\hat{\theta}_{n})$ and $\mathcal{S}_{0}(\tilde{\theta}_{n})$. To establish the behavior of these quantities, we first note that, under standard regularity conditions, see, e.g., \citet{Newey1994}, as $n \rightarrow \infty $, $\hat{\theta}_{n} \overset{p}{\rightarrow} \theta^{0} = [\eta^{0^{\prime}}\ \gamma^{0^{\prime}}]^{\prime}$, and $\tilde{\theta}_{n} \overset{p}{\rightarrow} \theta^{\star} = [\eta^{\star^{\prime}}\ \gamma^{\star^{\prime}}]^{\prime}$, with $\gamma^{\star} = [\gamma_{1}^{\star^{\prime}}\ \cdots\ \gamma_{K}^{\star^{\prime}}]^{\prime}$ where each $\gamma_{j}^{\star}$ maximizes the corresponding limiting expected average score in \eqref{eqn:indiv}, $\eta^{\star}$ maximizes $\mathcal{S}_{0}(\eta, \gamma^{\star})$, and $\theta^{0}$ maximizes $\mathcal{S}_{0}(\theta)$.}

As a measure of forecast performance, the out-of-sample expected average score is in the same spirit as the performance measures considered by \cite{West1996}, \cite{Hansen2005} and \cite{Giacomini2006}. However, unlike the above references, we are not concerned with testing the accuracy of different forecasting methods. Our main goal is to understand and document the impact of sampling variability on different forecast combination methods, and we are not interested in ascertaining which forecasting combination method provides superior forecasting performance; at least not within a conventional, i.e., formal, hypothesis testing framework.

We can summarize the first-order implications for the forecast performance of the one- and two-stage approaches as follows.

\begin{theorem}
\label{thm:one} If Assumptions \ref{ast:exp}-\ref{ast:dist} in \appendixname\ \ref{subsec:regularity} are satisfied, then the following results hold.

\smallskip

\noindent 1. With probability converging to one, as $n\rightarrow\infty$, $\mathcal{S}_0(\hat\theta_n) > \mathcal{S}_0(\tilde\theta_n)$.

\smallskip

\noindent 2. $\|\mathcal{S}_0(\tilde{\eta}_{n}, \tilde{\gamma}_{n}) - \mathcal{S}_0(\eta^{\star}, \tilde{\gamma}_{n}) \| = o_p(n^{-1/2})$.

\end{theorem}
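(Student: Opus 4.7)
The plan is to address the two parts separately, leveraging the standard M-estimation machinery underlying both the one- and two-stage estimators.

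For Part 1, the route is through the population quantities. Since $\theta^0 = \argmax_{\theta \in \Theta}\mathcal{S}_0(\theta)$ by definition, while the two-stage limit $\theta^\star = [\eta^{\star\prime}, \gamma^{\star\prime}]^\prime$ arises from maximizing each \emph{marginal} expected score in $\gamma_j$ and then optimizing only in $\eta$, the identifiability/strict-propriety hypotheses among Assumptions \ref{ast:exp}--\ref{ast:dist} should deliver the strict population inequality $\mathcal{S}_0(\theta^0) > \mathcal{S}_0(\theta^\star)$ (since $\gamma^\star \neq \gamma^0$ generically whenever the combination function is nontrivial). Together with continuity of $\mathcal{S}_0(\cdot)$ (a standard Newey--McFadden type condition) and the consistency results $\hat\theta_n \overset{p}{\rightarrow} \theta^0$, $\tilde\theta_n \overset{p}{\rightarrow} \theta^\star$ established earlier in the excerpt, the continuous mapping theorem yields $\mathcal{S}_0(\hat\theta_n) \overset{p}{\rightarrow} \mathcal{S}_0(\theta^0)$ and $\mathcal{S}_0(\tilde\theta_n) \overset{p}{\rightarrow} \mathcal{S}_0(\theta^\star)$. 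Combining these three ingredients gives $\mathrm{Pr}\bigl(\mathcal{S}_0(\hat\theta_n) > \mathcal{S}_0(\tilde\theta_n)\bigr) \to 1$.

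For Part 2, the argument is a two-step Taylor expansion of the population objective $\mathcal{S}_0$. Standard two-stage extremum estimator theory (e.g.\ \citealp{Newey1994}) supplies $\tilde\eta_n - \eta^\star = O_p(n^{-1/2})$ and $\tilde\gamma_n - \gamma^\star = O_p(n^{-1/2})$ under the stated regularity conditions. Expanding in the first argument around $\eta^\star$ while holding $\tilde\gamma_n$ fixed yields $\mathcal{S}_0(\tilde\eta_n, \tilde\gamma_n) - \mathcal{S}_0(\eta^\star, \tilde\gamma_n) = \nabla_\eta \mathcal{S}_0(\eta^\star, \tilde\gamma_n)^\prime (\tilde\eta_n - \eta^\star) + O_p(\|\tilde\eta_n - \eta^\star\|^2)$, where the quadratic remainder is $O_p(n^{-1})$. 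For the linear term, a further expansion of $\nabla_\eta \mathcal{S}_0(\eta^\star, \cdot)$ about $\gamma^\star$, combined with the interior first-order condition $\nabla_\eta \mathcal{S}_0(\eta^\star, \gamma^\star) = 0$ (which holds because $\eta^\star = \argmax_\eta \mathcal{S}_0(\eta, \gamma^\star)$), gives $\nabla_\eta \mathcal{S}_0(\eta^\star, \tilde\gamma_n) = O_p(\|\tilde\gamma_n - \gamma^\star\|) = O_p(n^{-1/2})$. Hence the linear term is $O_p(n^{-1/2}) \cdot O_p(n^{-1/2}) = O_p(n^{-1})$, and the full difference is $O_p(n^{-1}) = o_p(n^{-1/2})$, as claimed. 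This is precisely the mechanism behind the paper's headline finding that, at first order, estimation of $\eta$ contributes no sampling variability to forecast performance.

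The most delicate step is the strict population gap in Part 1: one must rule out the atypical coincidence $\theta^\star = \theta^0$ in which the marginal-score-optimal constituent parameters happen to equal the joint-score-optimal ones. I would expect this to be handled either by an explicit non-coincidence/identification assumption or by uniqueness of $\theta^0$ inherited from strict propriety plus a structural condition on $Q_\eta$. Beyond that, the main items requiring careful verification are the twice-continuous differentiability and interior-point hypotheses that license the repeated Taylor expansions in Part 2, and the $\sqrt{n}$-consistency of $\tilde\theta_n$ under the paper's time-series dependence structure; these should be direct consequences of Assumptions \ref{ast:exp}--\ref{ast:dist}, after which the remaining manipulations are routine.
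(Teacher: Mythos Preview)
Your proposal is correct and follows essentially the same approach as the paper: Part~1 via the population inequality $\mathcal{S}_0(\theta^0)>\mathcal{S}_0(\theta^\star)$ plus consistency and the continuous mapping theorem, and Part~2 via Taylor expansion combined with the first-order condition $\partial\mathcal{S}_0(\eta^\star,\gamma^\star)/\partial\eta=0$. The only cosmetic difference is that the paper expands $\mathcal{S}_0(\tilde\eta_n,\tilde\gamma_n)$ and $\mathcal{S}_0(\eta^\star,\tilde\gamma_n)$ separately about $\theta^\star$ and subtracts the common $\gamma$-term, whereas you expand the difference directly in $\eta$ at the random point $\tilde\gamma_n$; your route is slightly more direct and in fact delivers the marginally sharper bound $O_p(n^{-1})$, while the paper's organization makes the surviving $\gamma$-contribution explicit in a way that feeds straight into the asymptotic variance formula in Theorem~\ref{thm:two}.
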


\begin{remark}
\label{rmk:oracle} Theorem \ref{thm:one} has implications for the practical performance of forecast combinations. Firstly, Part 1.\ of Theorem \ref{thm:one} shows that the two-stage nature of standard forecast combinations ensures that, in any strictly proper scoring rule, the one-stage combination will achieve a (weakly) higher performance than the two-stage approach, asymptotically. Secondly, Part 2.\ of Theorem \ref{thm:one} shows that, at first-order, sampling variability in the performance of two-stage combinations is unaffected by sampling variability from optimizing the combination weights, and is therefore driven entirely by sampling variability from estimating the constituent models. In particular, we see in Theorem \ref{thm:one}, Part 2.\ that knowing $\eta ^{\star }$ \textit{a priori} (see the right-hand $\mathcal{S}_{0}$ term) or estimating it with $\tilde{\eta}_{n}$ (see the left-hand $\mathcal{S}_{0}$ term) makes no difference (see the subtraction) to the first-order behavior of the forecast performance measure (see the $o_{p}(n^{-1/2})$ term on the right-hand-side).
\end{remark}

\begin{remark}
Part 2.\ of Theorem \ref{thm:one} has implications for the so-called forecast combination puzzle.\footnote{The literature on the forecast combination puzzle focuses almost exclusively on linear combinations of point forecasts (\citealp{Martin2021}, is an exception). Recall from Section \ref{subsec:comb} that our results pertain to a generic combination function, and that linear combinations are indeed such a function. We conjecture that our results remain valid when the scoring rule $S$ is replaced by an appropriate reward function for point forecasts, which would imply that our results and conclusions apply in the point forecasting context.} Namely, optimizing the combination in a two-stage fashion implies that $\mathcal{S}_{0}(\tilde{\eta}_{n}, \tilde{\gamma}_{n}) \overset{p}{\rightarrow} \mathcal{S}_{0}(\eta^{\star}, \gamma^{\star}) > \mathcal{S}_{0}(\overline{\eta}, \gamma^{\star})$, for all $\overline{\eta} \in \mathcal{E} - \{\eta^{\star}\}$ (and in particular, for $\overline{\eta}$ being the vector of equal weights), where the result follows by continuity of $\mathcal{S}_{0}$, the convergence $\tilde{\theta}_{n} \overset{p}{\rightarrow} \theta^{\star}$, and since $\eta^{\star}$ maximizes $\mathcal{S}_{0}(\eta, \gamma^{\star })$. Critically, however, Theorem \ref{thm:one}, Part 2.\ demonstrates that this optimization imparts no additional variability, at first-order, into the asymptotic distribution of $\mathcal{S}_{0}(\tilde{\eta}_{n},\tilde{\gamma}_{n})$. Hence, Theorem \ref{thm:one} lends support to the optimal linear pool over its equally weighted counterpart as it pertains to forecast performance evaluations conducted via scoring rules, and in cases where $\eta ^{\star }$ is not the equally-weighted combination. However, when $\eta^{\star}$ is the equally weighted combination, Theorem \ref{thm:one}, Part 2.\ implies that estimated forecast combinations will perform very similarly to the equally weighted combination, and {that }there will be little difference between the two forecast distributions in finite-samples; at least as measured by $\mathcal{S}_{0}$ or a consistent estimator thereof.
\end{remark}

The following result compares, in our chosen measure of forecast performance, the sampling variability of one- and two-stage forecast combinations.

\begin{theorem}
\label{thm:two} If Assumptions \ref{ast:exp}-\ref{ast:dist} in \appendixname\ \ref{subsec:regularity} are satisfied and we have $\partial \mathcal{S}_0(\eta^{\star}, \gamma^{\star}) / \partial \gamma\allowbreak \ne 0$, then for $W^0$ and $W^{\star}$ as defined in \appendixname\ \ref{subsec:gmm}:

\smallskip

\noindent(One-stage) $\mathcal{S}_0(\hat{\theta}_n) = \mathcal{O}_p(n^{-1})$. More specifically, for $X\sim N(0, W^0)$, \\ $n \lbrace \mathcal{S}_0(\hat{\theta}_{n}) - \mathcal{S}_0(\theta^0) \rbrace \Rightarrow - \frac{1}{2} X^{\prime} \left[ - \partial^2 \mathcal{S}_{0}(\theta^0) / \partial \theta \partial \theta^{\prime} \right] X.$


\smallskip

\noindent(Two-stage) $\mathcal{S}_0(\tilde{\theta}_n) = \mathcal{O}_p(n^{-1/2}) $. More specifically, for $W^\star_{\gamma\gamma}$ denoting the $\gamma\gamma$-block of $W^\star$, \\ $\sqrt{n} \lbrace \mathcal{S}_0(\tilde{\theta}_{n}) - \mathcal{S}_{0}(\theta^{\star}) \rbrace \Rightarrow N(0, [\partial \mathcal{S}_{0}(\eta^{\star}, \gamma^{\star}) / \partial \gamma]^{\prime} W^{\star}_{\gamma \gamma} [\partial \mathcal{S}_0(\eta^{\star}, \gamma^{\star}) / \partial \gamma]).$


\end{theorem}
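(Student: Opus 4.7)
\textbf{Proof plan for Theorem~\ref{thm:two}.}

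The plan is to combine Part~2 of Theorem~\ref{thm:one} with Taylor expansions of $\mathcal{S}_{0}$ about the pseudo-true values $\theta^{0}$ and $\theta^{\star}$, reading off the rates from which first-order conditions do, or do not, hold at those values. Standard M-estimator arguments (e.g.\ \citet{Newey1994}) applied under the regularity conditions yield $\sqrt{n}(\hat{\theta}_{n} - \theta^{0}) \Rightarrow N(0, W^{0})$ in the one-stage case and $\sqrt{n}(\tilde{\theta}_{n} - \theta^{\star}) \Rightarrow N(0, W^{\star})$ in the two-stage case; I would take these as the starting point.

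For the one-stage case, the first-order condition $\partial \mathcal{S}_{0}(\theta^{0}) / \partial \theta = 0$ (from $\theta^{0}$ being a maximizer of $\mathcal{S}_{0}$) makes the gradient term vanish in the second-order Taylor expansion
\[
\mathcal{S}_{0}(\hat{\theta}_{n}) - \mathcal{S}_{0}(\theta^{0}) = \tfrac{1}{2}(\hat{\theta}_{n} - \theta^{0})^{\prime} \bigl[\partial^{2} \mathcal{S}_{0}(\theta^{0}) / \partial \theta \partial \theta^{\prime}\bigr](\hat{\theta}_{n} - \theta^{0}) + o_{p}(\|\hat{\theta}_{n} - \theta^{0}\|^{2}).
\]
Multiplying through by $n$ and applying the continuous mapping theorem (with the Hessian negative semi-definite at the maximum) immediately gives the stated quadratic-form limit and the $\mathcal{O}_{p}(n^{-1})$ rate.

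For the two-stage case, the cleanest route is to first apply Theorem~\ref{thm:one}, Part~2 to rewrite
\[
\mathcal{S}_{0}(\tilde{\theta}_{n}) - \mathcal{S}_{0}(\theta^{\star}) = \bigl[\mathcal{S}_{0}(\eta^{\star}, \tilde{\gamma}_{n}) - \mathcal{S}_{0}(\eta^{\star}, \gamma^{\star})\bigr] + o_{p}(n^{-1/2}),
\]
which sidesteps any need to track the $\tilde{\eta}_{n} - \eta^{\star}$ contribution. A first-order Taylor expansion of $\gamma \mapsto \mathcal{S}_{0}(\eta^{\star}, \gamma)$ about $\gamma^{\star}$ then yields
\[
\mathcal{S}_{0}(\eta^{\star}, \tilde{\gamma}_{n}) - \mathcal{S}_{0}(\eta^{\star}, \gamma^{\star}) = (\tilde{\gamma}_{n} - \gamma^{\star})^{\prime} \bigl[\partial \mathcal{S}_{0}(\eta^{\star}, \gamma^{\star}) / \partial \gamma\bigr] + O_{p}(\|\tilde{\gamma}_{n} - \gamma^{\star}\|^{2}).
\]
Because $\gamma^{\star}$ is defined by maximizing the individual constituent scores rather than the joint $\mathcal{S}_{0}$, the gradient $\partial \mathcal{S}_{0}(\eta^{\star}, \gamma^{\star}) / \partial \gamma$ is generically nonzero -- which is precisely the maintained hypothesis -- so the leading term is $\mathcal{O}_{p}(n^{-1/2})$ and dominates the $\mathcal{O}_{p}(n^{-1})$ remainder. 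Extracting the $\gamma$-marginal of $\sqrt{n}(\tilde{\theta}_{n} - \theta^{\star})$, namely $N(0, W^{\star}_{\gamma \gamma})$, and applying the delta method delivers the stated Gaussian limit.

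The main obstacle lies in rigorously justifying the Taylor expansions of $\mathcal{S}_{0}$ and the interchange of limits and derivatives that lets one identify $\partial \mathcal{S}_{0} / \partial \theta$ with the limit of $\mathbb{E}[\partial \mathcal{S}_{n} / \partial \theta]$; both should follow from the dominated-convergence-type conditions in Assumption~\ref{ast:exp} together with continuity of the second derivatives in a neighborhood of the limit points. The more conceptual point, which drives both halves of the theorem, is the asymmetry in the first-order conditions at $\theta^{\star}$ -- gradient zero in $\eta$, nonzero in $\gamma$ -- which both produces the rate gap ($n^{-1}$ versus $n^{-1/2}$) and explains the qualitative difference between the quadratic-form one-stage limit and the Gaussian two-stage limit.
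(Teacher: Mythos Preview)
Your proposal is correct and follows essentially the same approach as the paper. For the one-stage part you give exactly the paper's argument (second-order Taylor expansion with $\partial\mathcal{S}_{0}(\theta^{0})/\partial\theta=0$, then continuous mapping); for the two-stage part the paper applies the first-order delta method directly to the full vector $\tilde{\theta}_{n}$ and then uses $\partial\mathcal{S}_{0}(\eta^{\star},\gamma^{\star})/\partial\eta=0$ to kill the $\eta$-block, whereas you first invoke Theorem~\ref{thm:one}, Part~2 to drop the $\tilde{\eta}_{n}$ dependence and then expand in $\gamma$ alone --- this is the same computation organized in a different order, not a substantively different route.
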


\begin{remark}
\label{rmk:rateofconvergence} Theorem \ref{thm:two} implies that not only does the one-stage approach yield \textit{better} forecast performance (see Theorem \ref{thm:one}, Part 1.), it also produces forecasts with \textit{lower sampling variability} than the two-stage combination, as measured by our forecasting accuracy measure.
\end{remark}

\begin{remark}
Part 2.\ of Theorem \ref{thm:two} demonstrates that the out-of-sample forecast performance for the two-stage combination does not depend on the sampling variability of the parameter estimates for the combination parameters $\eta $. This is consistent with the results of Theorem \ref{thm:one}, Part 2.\ in that, at first order, the sampling variability of the out-of-sample forecast performance is entirely driven by the sampling variability of the estimated constituent models.
\end{remark}

\begin{remark}
\label{rmk:derivassump} Theorem \ref{thm:two} relies on the assumption that the one- and two-stage estimators converge to different limit optimizers, that is, $\theta^{\star} \neq \theta^0$. If the limit optimizers coincide, and $\theta^{\star} = \theta^0$, Theorem \ref{thm:two} does not apply.\footnote{In view of their role as expected-score-maximizers, $\theta^{\star} = \theta^0$ if and only if $\gamma^{\star} = \gamma^0$.} For this case we require more intricate structures to systematically compare the forecast performance of the one- and two-stage estimators, and we leave this challenge for future research. For an example of a two-stage estimator that coincides with its one-stage counterpart in the limit, see \citet[sec. 2]{Pagan1986}.
\end{remark}

\section{Monte Carlo Analysis\label{sec:montecarlo}}

\subsection{DGP, Forecast Combination and Scoring Rules\label{subsec:montecarlodgp}}

Figures \ref{fig:sim} through \ref{fig:sim4} illustrate Theorems \ref{thm:one} and \ref{thm:two} in the case of a simple example. Data $Y_1, Y_2, \ldots, Y_n$ is generated from the following left- and right-censored\footnote{Censoring removes rare but extreme outliers that corrupt simulation results. By drawing a sample of $10^7$ draws from $Y_t$, we found that $\mathrm{Pr}(Y_t\ \mathrm{censored}) \approx 0.0035$.} first-order autoregressive (AR(1)) process with conditionally Gaussian errors and with a first-order autoregressive conditional heteroscedasticity (ARCH(1)) structure: 
\begin{equation*}
Y_t = \min(\max(X_t,-5),5),\ X_t = 0.5 X_{t-1} + V_t Z_t,\ V_t^2 = 0.2 + 0.75 V_{t-1}^2 Z_{t-1}^2,\ Z_t \overset{i.i.d.}{\sim} N(0,1).
\end{equation*}
The true DGP is stationary, with an unconditional mean of zero and an unconditional standard deviation of approximately 0.93.

Two constituent models are used to define the forecast combination: a Gaussian AR(1) model, 
\begin{equation}
Y_t = \alpha_0 + \alpha_1 Y_{t-1} + \sigma Z_t,\ Z_t \overset{i.i.d.}{\sim} N(0, 1), \label{eqn:fmod1}
\end{equation}
and a constant-mean model, with normally distributed innovations and an ARCH(1) structure for the conditional variance, 
\begin{equation}
Y_t = \mu + V_t Z_t,\ V^2_t = \beta_0 + \beta_1 V^2_{t-1} Z^2_{t-1},\ Z_t \overset{i.i.d.}{\sim} N(0, 1). \label{eqn:fmod2}
\end{equation}
These two models have, respectively, the following one-step-ahead predictive CDFs for $Y_t$ conditional on $Y_{1:t-1}$: 
\begin{equation*}
\textstyle
F_1(y_t \mid Y_{1:t-1} ; \gamma_1) = \Phi \left( \frac{y_t - \alpha_0 - \alpha_1 Y_{t-1}}{\sigma} \right),\ F_2(y_t \mid Y_{1:t-1} ; \gamma_2) = \Phi \left( \frac{y_t - \mu}{\sqrt{\beta_0 + \beta_1 (Y_{t-1} - \mu)^2}} \right),
\end{equation*}
where $\gamma_1 = [ \alpha_0\ \alpha_1\ \sigma^2 ]^{\prime}$, $\gamma_2 = [ \mu\ \beta_0\ \beta_1 ]^{\prime}$ and $\Phi$ is the CDF of the standard normal distribution. Our forecast combination is a linear pool of these two models, so the one-step-ahead predictive CDF of the combination is 
\begin{equation}
F_c(y_t \mid Y_{1:t-1} ; \theta) = \eta F_1(y_t \mid Y_{1:t-1} ; \gamma_1) + (1 - \eta) F_2(y_t \mid Y_{1:t-1} ; \gamma_2),  \label{eqn:predcdfsim}
\end{equation}
where $\eta \in [0,1]$ is the weight assigned to the AR(1) forecast, $1-\eta$
is the weight assigned to the ARCH(1) forecast, and $\theta = [\eta\ \gamma^{\prime}]^{\prime}$. Let $F^{\theta}_{c,t} = F_c(\, \cdot \mid Y_{1:t-1} ; \theta)$.

Since our forecast combination is supported on $\mathbb{R}$ and our DGP is supported on $[-5, 5]$, our forecast combination is clearly misspecified. Censoring occurs rarely however, so it is pertinent to discuss the similarity of our forecast combination to $X_t$, the stochastic process of the DGP before censoring. Whereas $X_t$ has a normal forecast distribution, our forecast combination is a mixture of two normal distributions, and is therefore normal if and only if $\eta = 0$ or $\eta = 1$. If $\eta = 0$, the forecast combination fails to capture the AR characteristics of $X_t$, and if $\eta = 1$, the forecast combination fails to capture the ARCH characteristics of $X_t$. Even without censoring, our forecast combination is therefore unable to exactly represent the true DGP, no matter how the parameters are estimated, nor how large is the number of observations available.

We make use of the log score and the censored log score defined in \eqref{eqn:simsl} and \eqref{eqn:simscls}, respectively. The log score is a ``local'' scoring rule, returning a high value if the realized observation is in the high-density region of the distributional forecast combination, and a low value otherwise. Our chosen censored log score, on the other hand, prioritizes accurate forecasts on the set $B = (-\infty, F^{-1}(0.2)]$ defined by the left 20\% tail of the stationary distribution of the true DGP, where $F^{-1}$ is the quantile function of this distribution. The quantile function $F^{-1}$ is estimated using $10^7$ draws from the DGP. In what follows, we abbreviate the log score as LS, and the censored log score as $\mathrm{CS}_{<20\%}$.

\subsection{Simulation Design}

To illustrate Theorems \ref{thm:one} and \ref{thm:two}, we require estimates of the expectation and variance of the forecast performance $\mathcal{S}_0(\hat{\vartheta}_n)$ in \eqref{eqn:S0}, with respect to the sampling distribution of some parameter estimator denoted generically by $\hat{\vartheta}_n$, for any given score $S$ and sample size $n$. The nature of $\hat{\vartheta}_n$ is determined by the approach used to estimate the forecast combination. We also require estimates of the expected average score of the true DGP, $\mathcal{S}_{DGP} \coloneqq \lim_{n \to \infty} \mathbb{E}[\frac{1}{n} \sum_{t=1}^n S(F_{0,t}, Y_t)]$, and of the limit optimizer, $\theta^{\star}$, where we use the notation $F_{0,t}$ to indicate the true one-step-ahead predictive based on data up to time $t-1$.

We will detail our precise simulation design shortly. For now, we make the following observations. In general, none of the required moments are available in closed form; nor is $\theta ^{\star }$. Our performance measure  $\mathcal{S}_{0}(\cdot)$ is an expectation, however, as is $\mathcal{S}_{DGP}$, so if some standard regularity conditions hold we can evaluate these to any desired degree of accuracy as the sample mean from a sufficiently long realization drawn from the true DGP. An ``error-free'' estimate of $\theta ^{\star }$ is then obtained by optimizing the estimate of $\mathcal{S}_{0}(\cdot)$.

To characterize the sampling variation of the expectation $\mathcal{S}_0(\cdot)$ as a function of the parameters $\hat{\vartheta}_n$ of the estimated forecast combination, we need estimates of the mean and variance of $\mathcal{S}_0(\hat{\vartheta}_n)$, which we obtain by using sampling in three ways. First, to produce a single $\hat{\vartheta}_n$. Second, to estimate $\mathcal{S}_0(\cdot)$ (as described above) at the given $\hat{\vartheta}_n$. Third, to replicate this process, in order to estimate the mean and variance of the performance measure $\mathcal{S}_0(\hat{\vartheta}_n)$ by the sample mean and sample variance, respectively, over $M$ replicated estimates, $\hat{\vartheta}_n$. We cannot estimate the mean and variance of $\mathcal{S}_0(\hat{\vartheta}_n)$ with a large enough value of $M$ to render the sampling error negligible, because of the need to optimize to produce $\hat{\vartheta}_n$ at each iteration. We therefore quantify this error by producing confidence intervals for the mean and variance of $\mathcal{S}_0(\hat{\vartheta}_n)$ via standard asymptotics for i.i.d.\ draws. Below, we use a pre-subscript $i$ on an estimator ${}_i \hat{\vartheta}_T$ to indicate that the corresponding forecast combination is optimized according to the score $S^i$, for some $i \in \{\mathrm{LS}, \mathrm{CS}_{<20\%}\}$ (i.e., for some $S^i \in \{S^{\mathrm{LS}}, S^{\mathrm{CS}_{<20\%}}\}$). We also use a post-superscript $j$ on a score $S^j$ used to measure performance. In this way, the indices $i$ and $j$ distinguish between the score $S^i$ used to estimate the parameters, and the score $S^j$ used to measure performance, which are not always the same.

In detail, we perform the following steps:

\medskip

\noindent1. Draw $10^6$ observations $y^{(0)}_{1:10^6}$ from the true DGP, given in the previous subsection.

\medskip

\noindent2. Produce the two-stage estimates\textbf{\ }${}_{i}\tilde{\theta}_{10^{6}}$ in \eqref{eqn:est2s2}, via the scores $i=\mathrm{LS},\mathrm{CS}_{<20\%}$, and using $y_{1:10^{6}}$. With this number of observations being very large, we view the sample criterion in \eqref{eqn:est2s1} as an error-free estimate of $\mathcal{S}_{0}(\cdot )$, and ${}_{i}\tilde{\theta}_{10^{6}}$ as thus an error-free estimate of $\theta ^{\star }$. In Step 5 we will extract the combination function parameters ${}_{i}\tilde{\eta}_{10^{6}}$ from ${}_{i}\tilde{\theta}_{10^{6}}$ as an essentially exact representation of $\eta ^{\star }$ given score $i$.

\medskip

\noindent3. Construct the true one-step-ahead predictives $F_{0,t}, t = 1, 2, \ldots, 10^6$ and compute 
\begin{equation}
\frac{1}{10^6} \sum_{t = 1}^{10^6} S^j(F_{0,t}, y^{(0)}_t) \approx \lim_{n \to \infty} \mathbb{E} \left[ \frac{1}{n} \sum_{t=1}^n S^j(F_{0,t}, Y_t) \right] = \mathcal{S}_{DGP}  \label{eqn:sdgp}
\end{equation}
for each $j \in \{\mathrm{LS}, \mathrm{CS}_{<20\%}\}$, where $S^{\mathrm{LS}} $ and $S^{\mathrm{CS}_{<20\%}}$ are the log score and censored log score, respectively, defined in the previous subsection. This quantity approximates the highest possible forecast performance $\mathcal{S}_{DGP}$ -- that attained by the true DGP -- which we use to benchmark the performance of our estimators.

\medskip

\noindent4. Draw 250000 observations $y^{(1)}_{1:250000}$ from the true DGP, independently of Step 1.

\medskip

\noindent5. Produce the one- and two-stage estimates, ${}_{i}\hat{\theta}_{n}^{(1)}$ and ${}_{i}\tilde{\theta}_{n}^{(1)}$, for the forecast combination given in the previous subsection, using $n$ of the observations drawn in Step 4 ($y_{1:n}^{(1)}$), according to the scores $i\in \{\mathrm{LS},\mathrm{CS}_{<20\%}\}$, for sample sizes $n=500,501,\ldots ,2000$. Also produce the two-stage estimator with the combination parameters fixed at their limiting values, ${}_{i}\tilde{\theta}_{n}^{(1)}|_{\eta ^{\star }} \coloneqq [ \eta ^{\star}\ {}_{i}\tilde{\gamma}_{n}^{(1)\prime} ]^{\prime}$, for the same scores $i$ and sample sizes $n$.

\medskip

\noindent6. Approximate the out-of-sample forecast performance using the average $\hat{\mathcal{S}}_{0}(\hat{\vartheta}^{(1)}_n) = \frac{1}{100n} \sum_{t = 250001 - 100n}^{250000} S^j(\hat{F}_{c,t}, y^{(1)}_{t})$ for all $\hat{\vartheta}^{(1)}_n \in \{{}_i\hat{\theta}^{(1)}_n, {}_i\tilde{\theta}^{(1)}_n, {}_i\tilde{\theta}^{(1)}_n|_{\eta^{\star}}; i = \mathrm{LS}, \mathrm{CS}_{<20\%}, n = 500,\allowbreak 501,\allowbreak \ldots,\allowbreak 2000\}$ and $j \in \{ \mathrm{LS}, \mathrm{CS}_{<20\%} \}$, where $\hat{F}_{c,t}$ is the combination's forecast distribution for $Y_t$ corresponding to the estimator $\hat{\vartheta}^{(1)}_n$ and past observations $y^{(1)}_{1:t-1}$. Given the large number of observations used to compute $\hat{\mathcal{S}_0}(\hat{\vartheta}_n^{(1)})$, this estimate is viewed as an error-free representation of $\mathcal{S}_0(\hat{\vartheta}_n^{(1)})$.

\medskip

\noindent7. Repeat Steps 4-6 1000 times, to obtain 1000 independent draws from the sampling distribution of the one-stage forecast performances $\mathcal{S}_0({}_i \hat{\theta}^{(1)}_n)$, $\mathcal{S}_0({}_i \hat{\theta}^{(2)}_n)$, \ldots, $\mathcal{S}_0({}_i \hat{\theta}^{(1000)}_n)$, the two-stage forecast performances $\mathcal{S}_0({}_i \tilde{\theta}^{(1)}_n)$, \ldots, $\mathcal{S}_0({}_i \tilde{\theta}^{(1000)}_n)$ and the two-stage forecast performances with fixed combination function parameters $\mathcal{S}_0({}_i \tilde{\theta}^{(1)}_n|_{\eta^{\star}})$, \ldots, $\mathcal{S}_0({}_i \tilde{\theta}^{(1000)}_n|_{\eta^{\star}})$, for each score $i$ used to estimate the parameters, each score $j$ used to measure performance, and each sample size $n$.

\medskip

\noindent8. Using the draws from each generic sampling distribution, approximate the first two moments by their sample counterparts: $\overline{\hat{\mathcal{S}}_0(\hat{\vartheta}_n)} = \frac{1}{1000} \sum_{m = 1}^{1000} \hat{\mathcal{S}}_0(\hat{\vartheta}^{(m)}_n) \approx \mathbb{E}[\mathcal{S}_0(\hat{\vartheta}_n)], $ and $\widehat{\mathrm{Var}}(\hat{\mathcal{S}}_0(\hat{\vartheta}_n)) = \frac{1}{1000} \sum_{m = 1}^{1000} \left( \hat{\mathcal{S}}_0(\hat{\vartheta}^{(m)}_n) - \overline{\hat{\mathcal{S}}_0(\hat{\vartheta}_n)} \right)^2 \approx \mathrm{Var}(\mathcal{S}_0(\hat{\vartheta}_n)),$ for all $\hat{\vartheta}_n \in \{{}_i\hat{\theta}_n, {}_i\tilde{\theta}_n, {}_i\tilde{\theta}_n|_{\eta^{\star}} ;\allowbreak i = \mathrm{LS}, \mathrm{CS}_{<20\%}, n = 500, 501, \ldots, 2000\}$ where $\mathcal{S}_0$ is defined with respect to $S^j, j \in \{\mathrm{LS}, \mathrm{CS}_{<20\%}\}$.

\medskip

\noindent9. Calculate the 95\% confidence intervals for the population moments $\mathbb{E}[\mathcal{S}_0(\hat{\vartheta}_n)]$ and $\mathrm{Var}(\mathcal{S}_0(\hat{\vartheta}_n))$ using the sample moments in Step 8 and standard asymptotics for i.i.d.\ draws.

\medskip

\noindent10. Using the approximation for $\mathcal{S}_{DGP}$ in \eqref{eqn:sdgp} and the sample moments from Step 8, produce point estimates and confidence intervals for the expected divergence $\mathcal{S}_{DGP} - \mathbb{E}[\mathcal{S}_0(\hat{\vartheta}_n)]$. Note that the score $j$ used to measure performance is identical in the definition of $\mathcal{S}_{DGP}$ and $\mathbb{E}[\mathcal{S}_0(\hat{\vartheta}_n)]$.

\subsection{Results}

Figures \ref{fig:sim} and \ref{fig:sim2} plot the expected divergence, $\mathcal{S}_{DGP} - \mathbb{E}[\mathcal{S}_0(\hat{\vartheta}_n)]$, in log score and censored log score terms, for a forecast combination parameterized by $\hat{\vartheta}_n$ and produced according to either the log score or the censored log score, and in either a one- or two-stage fashion. Approximation of this quantity, and construction of the corresponding 95\% confidence intervals, occurs as in Steps 8-10 of the simulation instructions above. A lower expected divergence, and thereby a lower value on the vertical axis of any of these plots, indicates a better expected forecast performance, for a given sample size $n$. That is, on average, $\mathcal{S}_0(\hat{\vartheta}_n)$ is closer to the maximum possible value, $\mathcal{S}_{DGP}$. For the moment, we will restrict our attention to the results for the one-stage (red) and two-stage (green) estimators, returning to discuss the two-stage estimator with the weight, $\eta$ (given in \eqref{eqn:predcdfsim}), fixed at its limit optimizer (blue) at the end of the section.

\begin{figure*}[t]
\includegraphics[width=\textwidth]{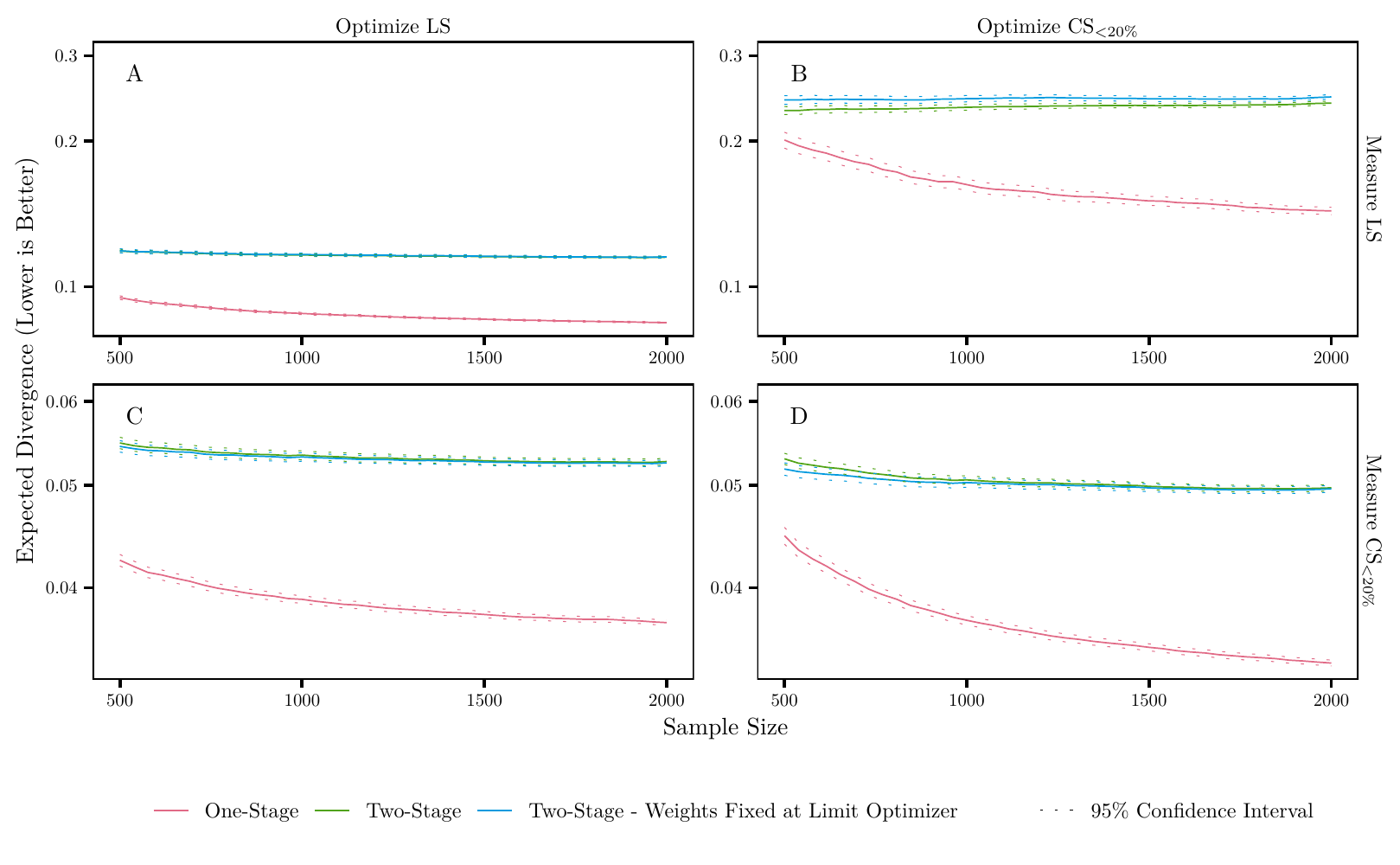}
\caption{The expectation of the difference between the out-of-sample one-step-ahead forecast performance of a misspecified forecast combination and that of the true DGP, over a range of sample sizes. \cite{Gneiting2007} call this quantity the expected divergence. The forecast combination is optimized in a one-stage fashion (red), a two-stage fashion (green) or in a way that comprises the first stage of the two-stage combination, followed by a fixed combination at $\tilde{\eta}_n = \eta^{\star}$ (blue). Parameters are optimized according to the log score (A and C, first column) or a censored log score that prioritizes accuracy in the lower 20\% tail of the forecast distribution (B and D, second column). The divergence is measured on the vertical axes according to the log score (A and B, first row) or the censored log score (C and D, second row). The expectations and confidence intervals are constructed as per Steps 8-10 in the text, with the 95\% confidence bounds appearing as small dashed lines.}
\label{fig:sim}
\end{figure*}

\begin{figure*}[t]
\includegraphics[width=\textwidth]{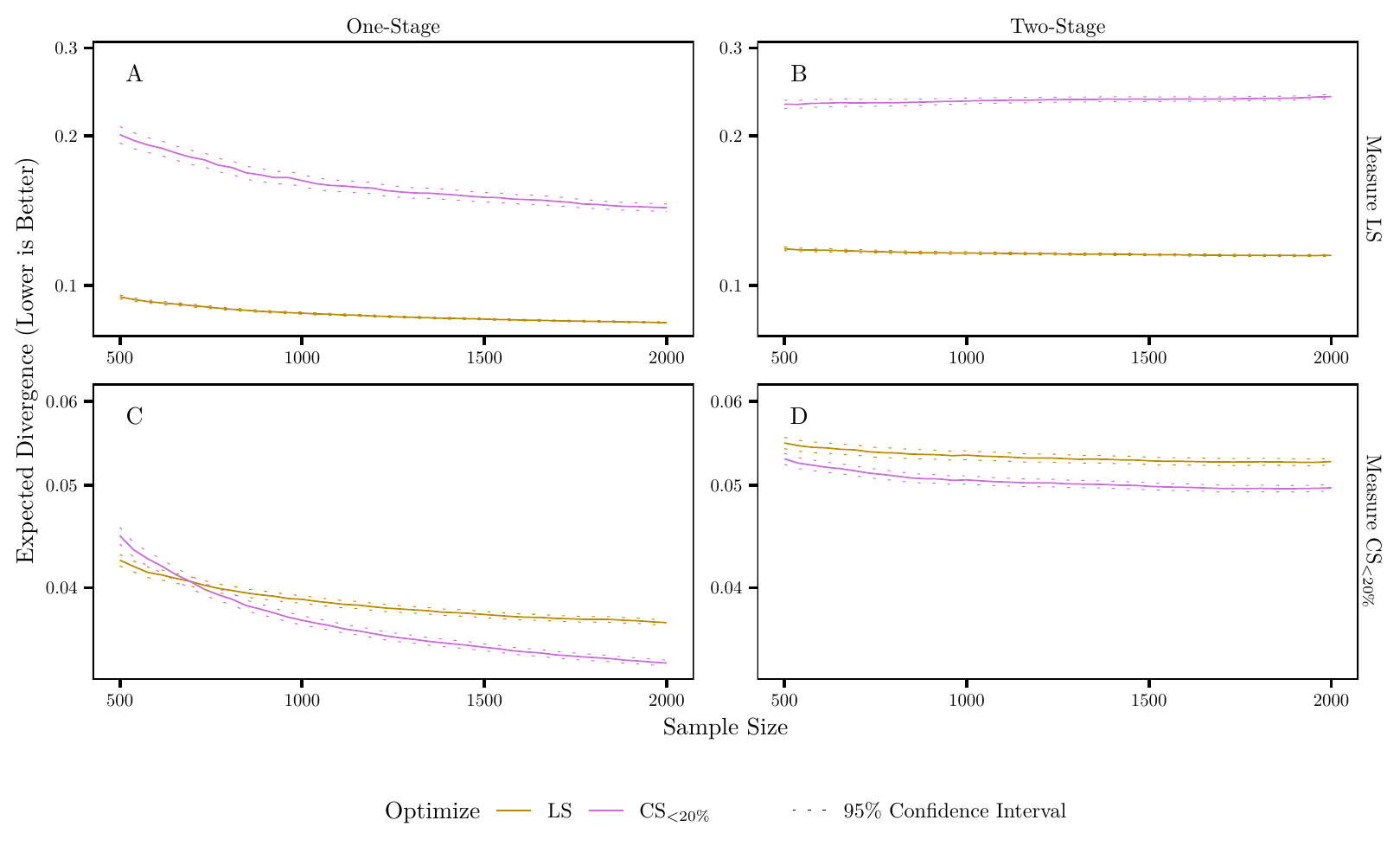}
\caption{The expectation of the difference between the out-of-sample one-step-ahead forecast performance of a misspecified forecast combination and that of the true DGP, over a range of sample sizes. \cite{Gneiting2007} call this quantity the expected divergence. The forecast combination is optimized in a one-stage (A and C, first column) or two-stage (B and D, second column) manner according to the log score (gold) or a censored log score that prioritizes accuracy in the lower 20\% tail of the forecast distribution (purple). The divergence is measured on the vertical axes according to the log score (A and B, first row) or the censored log score (C and D, second row). The expectations and confidence intervals are constructed as per Steps 8-10 in the text, with the 95\% confidence bounds appearing as small dashed lines.}
\label{fig:sim2}
\end{figure*}

\begin{figure*}[t]
\includegraphics[width=\textwidth]{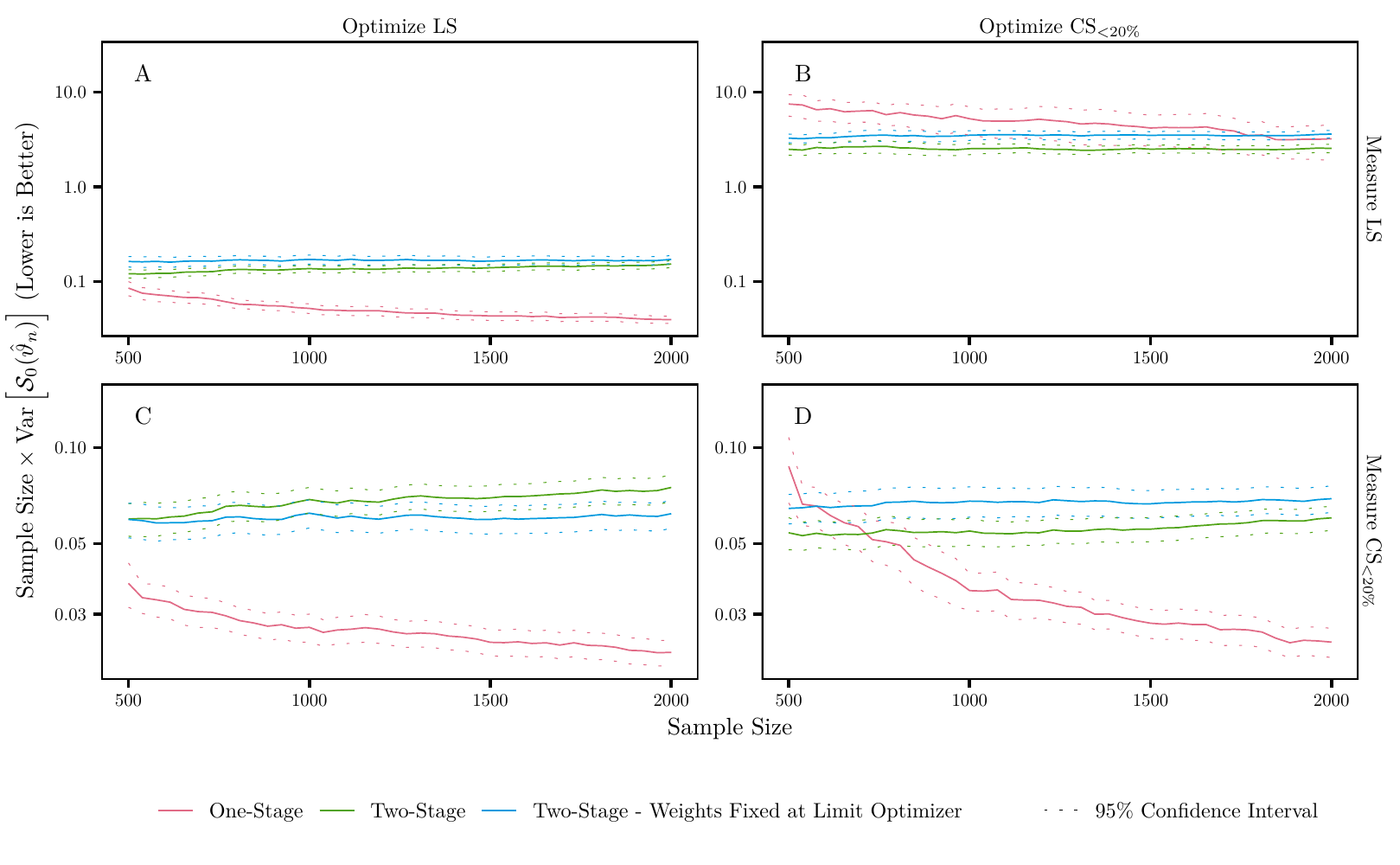}
\caption{The sample size multiplied by the variance of the one-step-ahead forecast performance $\mathcal{S}_0(\hat{\vartheta}_n)$ of a misspecified forecast combination, over a range of sample sizes. Here, the forecast combination parameter estimator $\hat{\vartheta}_n$ optimizes the average log score (A and C, first column) or censored log score (B and D, second column) in a one-stage fashion (red), a two-stage fashion (green) or in a way that comprises the first stage of the two-stage combination, followed by a fixed combination at $\tilde{\eta}_n = \eta^{\star}$, the optimal two-stage combination function parameter values (blue). The censored log score prioritizes accuracy in the lower 20\% tail of the forecast distribution, and the variance of the forecast performance is measured on the vertical axes according to the log score (A and B, first row) or the censored log score (C and D, second row). The variances and confidence intervals are constructed as per Steps 8-10 in the text, with the 95\% confidence bounds appearing as small dashed lines.}
\label{fig:sim3}
\end{figure*}

\begin{figure*}[t]
\includegraphics[width=\textwidth]{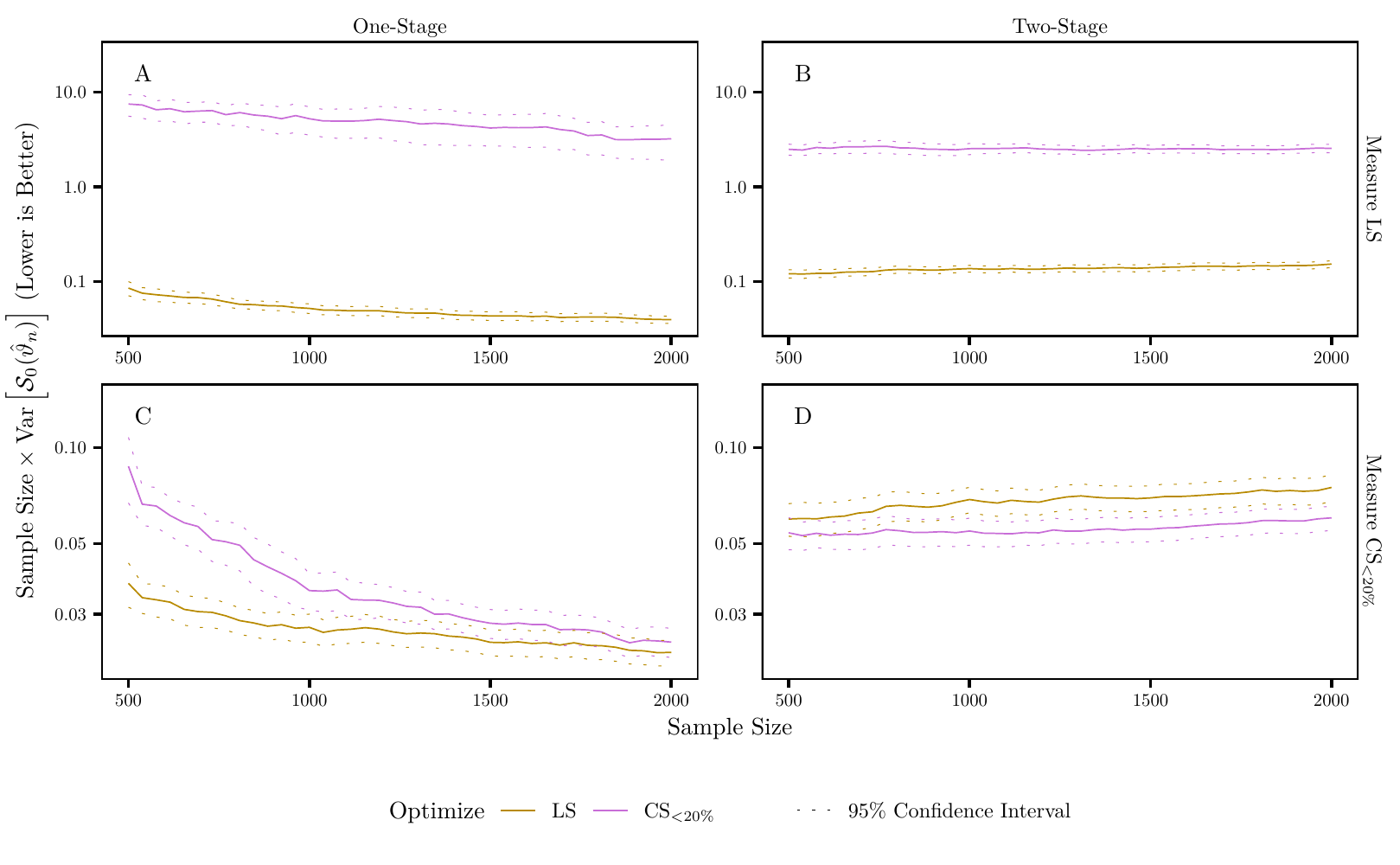}
\caption{The sample size multiplied by the variance of the one-step-ahead forecast performance $\mathcal{S}_0(\hat{\vartheta}_n)$ of a misspecified forecast combination, over a range of sample sizes. Here, the forecast combination parameter estimator $\hat{\vartheta}_n$ is optimized in a one-stage (A and C, first column) or two-stage (B and D, second column) fashion according to the log score (gold) or a censored log score that prioritizes accuracy in the lower 20\% tail of the forecast distribution (purple). The variance of the forecast performance is measured on the vertical axes according to the log score (A and B, first row) or the censored log score (C and D, second row). The variances and confidence intervals are constructed as per Steps 8-10 in the text, with the 95\% confidence bounds appearing as small dashed lines.}
\label{fig:sim4}
\end{figure*}

In the diagonal Panels A and D of Figure \ref{fig:sim}, performance is measured and forecast combinations are produced according to the same score. In these plots, we see that the one-stage approach (red) has better expected forecast performance (a lower value on the vertical axis) than the two-stage approach (green) when measuring forecast performance and producing forecast combinations according to the same score. This is consistent with the Theorem \ref{thm:one}, Part 1.\ result that the limiting forecast performance of the one-stage forecast combination is greater than that of the two-stage forecast combination when the same score is used to measure performance and produce forecast combinations.

In the off-diagonal Panels B and C, the scores used to measure performance and produce forecast combinations do not coincide and Theorem \ref{thm:one} does not apply, but uniform dominance of the one-stage forecast combination over the two-stage alternative nevertheless persists.

Figure \ref{fig:sim2} displays the same results as in Figure \ref{fig:sim}, but through a different lens, highlighting the advantage for either method (one-stage or two-stage) of producing forecast combinations according to the score used to measure forecast performance. Since our forecast combination is misspecified, optimizing for different scores (via either the one- or two-stage approach) leads, in general, to parameter estimates with different limit optimizers, and therefore to forecast combinations with different limiting expected average scores. Panels A and B show that producing forecast combinations according to the log score (gold) leads to a better expected forecast performance in terms of log score (a lower value on the vertical axis), across all sample sizes, relative to optimizing according to the censored log score (purple). The extent of the dominance is similar for both the one- and two-stage approaches. Likewise, Panels C (for sample sizes over 700) and D (for all sample sizes) show that producing forecast combinations according to the censored log score (purple) leads to a better expected forecast performance in terms of censored log score (a lower value on the vertical axis), relative to optimizing according to the log score (gold); and for both the one- and two-stage methods.

Figures \ref{fig:sim3} and \ref{fig:sim4} show the variance of the forecast performance of the combination, in log score and censored log score terms, for forecast combinations that optimize either the log score or the censored log score. Once again, all computational details are given in Steps 8-10 of the simulation instructions above. A lower value on the vertical axis indicates a forecast combination whose forecast performance has a lower sampling variability. The variance has been multiplied by the sample size to offset the $n^{-1/2}$ convergence of most estimators, the exception being the one-stage combination where the optimization criterion and the forecast performance measure coincide. This implies, of course, that the figures for this exceptional case (red in Panels A and D of Figure \ref{fig:sim3}, and gold and purple in Panels A and C, respectively, of Figure \ref{fig:sim4}) ought to decline towards $-\infty$ as the sample size increases, since the forecast performance of this forecast combination converges as $n^{-1}$ (see Theorem \ref{thm:two}, Part 1.\ and Remark \ref{rmk:rateofconvergence}). This is indeed visible in all cases, despite the slight plateauing effect arising due to Monte Carlo error. Once again, we defer discussion of the two-stage estimator with the weight fixed at its limit optimizer (blue) until the end of the section.

In Panels A and D of Figure \ref{fig:sim3}, in which the optimization criterion and the forecast performance measure coincide, we see that the one-stage approach (red) has a forecast performance with a lower sampling variability for large sample sizes than the two-stage approach (green). This is consistent with the Theorem \ref{thm:two} result that the forecast performance of the one-stage forecast combination converges faster than that of the two-stage forecast combination when the same score is used to measure performance and optimize the parameters (see Remark \ref{rmk:rateofconvergence}). In the off-diagonal Panels B and C, the optimization criterion and the forecast performance measure do not coincide. The uniform dominance of the one-stage forecast combination remains in evidence in Panel C. However, in Panel B, there is no such dominance on view.

Figure \ref{fig:sim4} mimics Figure \ref{fig:sim2}, but with the (scaled) variance of the score being the focus. The advantage for both methods of producing forecast combinations according to the score used to measure forecast performance is on display, with Panel C illustrating the lone exception. In Panels A and B, we see that producing the forecast combination according to the log score (gold) leads to a smaller sampling variability in log-score forecast performance (a lower value on the vertical axis), across all sample sizes, relative to optimizing according to the censored log score (purple); and for both types of combinations. Panel D shows that, for the two-stage combination and for sample sizes over 1000 (at which point the two sets of confidence intervals do not overlap), optimizing according to the censored log score (purple) leads to a smaller sampling variability in censored-log-score forecast performance, relative to optimizing according to the log score (gold). For the one-stage combination in Panel C however, the sampling variability of the censored-log-score forecast performance is smaller for the combination optimizing the log score (gold) than for the combination optimizing the censored log score (purple), so long as sample sizes are small. For large sample sizes, optimizing either score leads to a similar censored-log-score sampling variability (with overlapping confidence intervals).

Finally, with reference to Panels A and D in both Figures \ref{fig:sim} and \ref{fig:sim3} -- in which the scores used to produce forecast combinations and measure performance coincide -- we see that both the (expected) forecast performance and the forecast performance variance are indistinguishable (that is, have overlapping confidence intervals), or at least are very similar, for the two-stage forecast combination (green) and the two-stage forecast combination with the combination function parameter (weight) fixed at its limit optimizer (blue). This is consistent with Theorem \ref{thm:one}, Part 2., which holds that the performance of these two forecast combinations converge to the same asymptotic distribution (given in Theorem \ref{thm:two}, Part 2.) if the score of the optimization criterion coincides with the score of the forecast performance measure. In this context, the limiting sampling variability of the forecast performance of the two-stage forecast combination derives entirely from sampling variability in the estimation of the constituent models, and the numerical results are simply highlighting this fact.

On the off-diagonal panels of these two figures (Panels B and C), where performance measurement and combination production are conducted according to different scores, results are mixed. Confidence intervals around the green and blue lines overlap in Panel C of Figures \ref{fig:sim} and \ref{fig:sim3}. For Panel B in Figures \ref{fig:sim} and \ref{fig:sim3} on the other hand, confidence intervals around the green and blue lines do not overlap, indicating that both the (expected) forecast performance and the forecast performance variance are different between the two two-stage forecast combinations when the combinations are estimated according to the censored log score and performance is measured according to the log score, at least for the sample sizes considered.

The results of this simulation exercise have been obtained via repeatedly sampling from a known true DGP. In empirical settings, the true DGP is unknown, and we must rely only on the observations before us. In the next section, we illustrate how to approximate the sampling distribution of the forecast performance, without the expectation of correct specification. The results illustrated in the above simulation exercise will be shown to be robust to this added complexity.

\section{Performance on S\&P500 Returns\label{sec:emp}}

In this empirical exercise, we explore the extent to which selected theoretical results of Section \ref{sec:implications} are reflected in estimated forecast combinations for S\&P500 returns.

\subsection{Data, Forecast Combination and Scoring Rules\label{subsec:empdatacomb}}

Our dataset contains 8565 daily continuously-compounded returns $y_{1},\ldots ,y_{8565}$ extending from January 5th, 1988 to December 31st, 2021, and was obtained from the Global Financial Data database. The constituent forecasting models are as given in \eqref{eqn:fmod1} and \eqref{eqn:fmod2} in Section \ref{subsec:montecarlodgp}, which are combined via the linear pool to give the combination specified by \eqref{eqn:predcdfsim}. Whilst these constituent models are admittedly less sophisticated than those that would typically be used to model financial returns, and the simple linear combination of such models less ambitious than other combination approaches adopted in the returns literature (e.g.\ \citealp{Billio2013}), these choices are sufficient for the purpose at hand. That purpose is simply to illustrate the role played by the different forms of sampling variability in the production of estimated forecast combinations; raw forecast accuracy, and the attainment of that accuracy via judicious model selection and combination, is not our goal.

Forecast combinations are produced in either a one- or two-stage fashion, and via optimization of a criterion function based on one of three scores: the log score (LS), and the censored log scores focusing on the lower 10\% ($\mathrm{CS}_{<10\%}$) and 20\% ($\mathrm{CS}_{<20\%}$) tails. The score $\mathrm{CS}_{<20\%}$ is defined at the end of Section \ref{subsec:montecarlodgp}, as is $\mathrm{CS}_{<10\%}$ on replacing ``$0.2$'' by ``$0.1$''. We have selected the log score because it is ubiquitous, and chosen\textbf{\ }censored log scores with a focus on the lower tails for their relevance in financial settings where a primary goal is the accurate prediction of large or unlikely losses. To measure predictive performance, the average out-of-sample LS, $\mathrm{CS}_{<10\%}$ and $\mathrm{CS}_{<20\%}$ values are calculated for two distinct evaluation periods: the ``overall'' period and an ``extreme'' period. The overall period extends from January 3rd, 2017 to December 31st, 2021, and includes the first two years of the COVID-19 pandemic and the three years prior. For the extreme period we consider returns from the first six months of the pandemic, from January 2nd, 2020 to June 30th, 2020, during which the S\&P500 experienced unusually extreme returns and high volatility. For both evaluation periods, forecast combinations are produced using all returns observed before the beginning of the period according to the steps outlined in Section \ref{subsec:produce}. For forecast combinations evaluated over the overall period, the in-sample dataset has a size of $n=7306$, and the size of the evaluation period itself is $\tau = 1259$. The corresponding sample sizes are $n = 8060$ and $\tau = 125$ when evaluating performance over the extreme period.

For any forecast combination parameterized generically by the vector $\theta$, the expected average score $\mathcal{S}_{0}(\theta)$ is estimated according by $\hat{\mathcal{S}}_{0}^{j}(\theta) = \frac{1}{\tau} \sum_{t = n+1}^{n + \tau} S^{j}(F_{c,t}^{\theta}, y_{t})$, where $j$ denotes the score evaluated (one of LS, $\mathrm{CS}_{<10\%}$, or $\mathrm{CS}_{<20\%}$), $y_{t}$ is the return on day $t$ and $F_{c,t}^{\theta}$ is the corresponding predictive CDF of our forecast combination, defined immediately after \eqref{eqn:predcdfsim} in Section \ref{subsec:montecarlodgp}. Unlike the simulation exercise of Section \ref{sec:montecarlo}, we see that the expected average score must be estimated by an out-of-sample average score based on a \textit{limited} sample size (either $\tau =1259$ or $\tau =125$). Our estimate of the expected average score is therefore subject to three sources of sampling variability: 1) from in-sample estimation of the constituent models, 2) from in-sample estimation of the weight, $\eta $, and 3) from out-of-sample estimation of the expected average score via $\hat{\mathcal{S}}_{0}(\cdot)$. Throughout this exercise, we take the out-of-sample average score as given, and quantify and visualize sampling variability from parameter estimation only, leaving an accounting of the out-of-sample sampling variability as a topic for future research.

In detail, we perform the following steps:

\smallskip

\noindent1. Given returns $y_{1:n}$, produce the one- and two-stage forecast combinations parameterized by ${}_i \hat{\theta}_n$ and ${}_i \tilde{\theta}_n$, respectively, for the functional form given in \eqref{eqn:fmod1}-\eqref{eqn:predcdfsim}, according to the scores $i \in \{\mathrm{LS}, \mathrm{CS}_{<10\%}, \mathrm{CS}_{<20\%}\}$, and for sample size $n = 7306$ (for producing one-step-ahead forecasts for the overall period) and $n = 8060 $ (when forecasting the extreme period). For both the overall period and the extreme period we use a fixed estimation window, such that the same parameter estimates (conditional on the same $n$ observations) are used to produce all $\tau$ out-of-sample forecasts. Let ${}_i \theta^0$ and ${}_i \theta^{\star}$ be the weak limits of the one- and two-stage forecast combinations parameterized by ${}_i \hat{\theta}_n$ and ${}_i \tilde{\theta}_n$, respectively, as $n \to \infty$.

\smallskip

\noindent2. For all $\hat{\vartheta}_n \in \{ {}_i \hat{\theta}_n, {}_i \tilde{\theta}_n ; i = \mathrm{LS}, \mathrm{CS}_{<10\%}, \mathrm{CS}_{<20\%}, (n, \tau) = (7306, 1259), (8060, 125)\}$:

\smallskip

\noindent (a) Let $\vartheta = \mathrm{plim}_n \hat{\vartheta}_n$ (either $\vartheta = {}_i \theta^0$ or $\vartheta = {}_i \theta^{\star}$) and obtain a consistent estimate $\hat{W}_n$ of the asymptotic covariance matrix $W$ defined by $\sqrt{n} (\hat{\vartheta}_n - \vartheta) \Rightarrow N(0, W)$ as $n \to \infty$.\footnote{See \appendixname\ \ref{subsec:gmm} for a GMM representation of $\hat{\vartheta}_n$, to which many works on GMM can be applied to produce a consistent $\hat{W}_n$. We use \cite{Andrews2002}.}

\smallskip

\noindent (b) Simulate i.i.d.\ draws $\hat{\vartheta}_n^{(1)}, \hat{\vartheta}_n^{(2)}, \ldots, \hat{\vartheta}_n^{(20000)}$ from the distribution $N(\hat{\vartheta}_n, \hat{W}_n/n)$ (conditional on $\hat{\vartheta}_n$ and $\hat{W}_n/n$), and compute $\hat{\mathcal{S}}^j_0(\hat{\vartheta}_n^{(1)}), \hat{\mathcal{S}}^j_0(\hat{\vartheta}_n^{(2)}), \ldots, \hat{\mathcal{S}}^j_0(\hat{\vartheta}_n^{(20000)})$ for $j \in \{LS,\allowbreak \mathrm{CS}_{<10\%},\allowbreak \mathrm{CS}_{<20\%}\}$.

\smallskip

\noindent (c) Produce a kernel density estimate based on the sample $\hat{\mathcal{S}}_{0}^{j}(\hat{\vartheta}_{n}^{(1)}), \hat{\mathcal{S}}_{0}^{j}(\hat{\vartheta}_{n}^{(2)}), \ldots, \hat{\mathcal{S}}_{0}^{j}(\hat{\vartheta}_{n}^{(20000)})$ and make note of the empirical estimate $\hat{\mathcal{S}}_{0}^{j}(\hat{\vartheta}_{n})$ and the 95\% confidence interval $\Big(\hat{\mathcal{S}}_{0}^{j}(\hat{\vartheta}_{n}^{(500)}), \allowbreak\hat{\mathcal{S}}_{0}^{j}(\hat{\vartheta}_{n}^{(19500)})\Big)$ for the out-of-sample score $\hat{\mathcal{S}}_{0}^{j}(\vartheta )$ of the limit optimizer $\vartheta$.\footnote{We can show that the draws $\hat{\mathcal{S}}_{0}^{j}(\hat{\vartheta}_{n}^{(1)}),\hat{\mathcal{S}}_{0}^{j}(\hat{\vartheta}_{n}^{(2)}), \ldots, \hat{\mathcal{S}}_{0}^{j}(\hat{\vartheta}_{n}^{(20000)})$ are from a consistent bootstrap distribution for $\hat{\mathcal{S}}_{0}^{j}(\hat{\vartheta}_{n})$, conditional on the out-of-sample draws $y_{n+1:n+\tau}$ that define $\hat{S}_{0}^{j}$, by applying Theorem 23.5 of \cite{VanDerVaart1998} with $\phi = \hat{\mathcal{S}}_{0}^{j}$, $\theta = \vartheta$, $\hat{\theta}_{n} = \hat{\vartheta}_{n}$, $T = N(0,W)$ and $\hat{\theta}_{n}^{\ast} \overset{i.i.d.}{\sim} N(\hat{\vartheta}_{n}, \hat{W}_{n}/n)$. Our confidence intervals are produced by Efron's percentile method, described in \citet[sec. 23.1]{VanDerVaart1998}.}

\subsection{Results}

Table \ref{tbl:emp1} contains the point estimates (labeled ``Average'') and confidence intervals (labeled ``95\% CI'') for the average out-of-sample LS, $\mathrm{CS}_{<10\%}$ and $\mathrm{CS}_{<20\%}$ (columns) for the forecast combination described above optimized according to LS, $\mathrm{CS}_{<10\%}$ or $\mathrm{CS}_{<20\%}$ in either a one- or two-stage fashion (rows). Bolded figures in the ``Average'' rows indicate the highest performing method according to the out-of-sample measure in the column heading. Likewise, bolded figures in the ``95\% CI'' rows indicate the method with the narrowest confidence interval around the out-of-sample score denoted in the column heading. Recall that the hold-out sample $y_{n+1:n+\tau }$ is taken as given when interpreting the confidence intervals, which thus reflect sampling variability in forecast combination production only. In short, the theoretical results in Section \ref{sec:implications} indicate that the one-stage combination where the in-sample and out-of-sample scores coincide (appearing on the Table's diagonals) ought to have both the highest average out-of-sample score and the narrowest 95\% CI. We will discuss results for the overall period (Panel A) first, and results for the extreme period (Panel B) second. We supplement these tabulated results with a visualization of the overall (extreme) results in Figure \ref{fig:emp2} (\ref{fig:emp3}).

\begin{table}[ht]
\centering
\resizebox{\linewidth}{!}{\begin{tabular}{lllllll}
  \rowcolor{Gray} \multicolumn{7}{c}{Panel A -- Overall period: January 3rd, 2017 to December 31st, 2021} \\
  & & & & \multicolumn{3}{c}{Out-of-Sample Score} \\
  \cmidrule(lr){5-7}
  & & & & LS & $\mathrm{CS}_{<10\%}$ & $\mathrm{CS}_{<20\%}$ \\
  \hline
  \multirow{12}{*}{\rotatebox[origin=c]{90}{In-Sample Score}} & LS & One-Stage & Average & \textbf{3.289} & \textbf{-0.003864} & 0.1268 \\ 
  & & & 95\% CI & \textbf{(3.278, 3.295)} & \textbf{(-0.009395, -0.0006305)} & \textbf{(0.1203, 0.1313)} \\ 
  & & Two-Stage & Average & 3.186 & -0.04076 & 0.07958 \\ 
  & & & 95\% CI & (3.172, 3.198) & (-0.04779, -0.03632) & (0.07239, 0.08438) \\ 
  & $\mathrm{CS}_{<10\%}$ & One-Stage & Average & 3.206 & -0.004871 & 0.1243 \\ 
  & & & 95\% CI & (3.134, 3.244) & (-0.06344, -0.0009392) & (0.07003, 0.1311) \\ 
  & & Two-Stage & Average & 2.814 & -0.009845 & \textbf{0.1309} \\ 
  & & & 95\% CI & (2.698, 2.924) & (-0.01797, -0.005612) & (0.1226, 0.1350) \\ 
  & $\mathrm{CS}_{<20\%}$ & One-Stage & Average & 3.180 & -0.003953 & 0.1263 \\ 
  & & & 95\% CI & (3.024, 3.241) & (-0.06838, 0.0002676) & (0.06320, 0.1360) \\ 
  & & Two-Stage & Average & 2.971 & -0.02151 & 0.1171 \\ 
  & & & 95\% CI & (2.906, 3.035) & (-0.03008, -0.01702) & (0.1083, 0.1219) \\ 
  \hline
  \rowcolor{Gray} \multicolumn{7}{c}{Panel B -- Extreme period: January 2nd, 2020 to June 30th, 2020} \\
  & & & & \multicolumn{3}{c}{Out-of-Sample Score} \\
  \cmidrule(lr){5-7}
  & & & & LS & $\mathrm{CS}_{<10\%}$ & $\mathrm{CS}_{<20\%}$ \\
  \hline
  \multirow{12}{*}{\rotatebox[origin=c]{90}{In-Sample Score}} & LS & One-Stage & Average & 2.085 & -0.05693 & 0.1028 \\ 
  & & & 95\% CI & (1.972, 2.139) & \textbf{(-0.1152, -0.02416)} & \textbf{(0.04506, 0.1348)} \\ 
  & & Two-Stage & Average & 1.739 & -0.2501 & -0.09458 \\ 
  & & & 95\% CI & (1.613, 1.827) & (-0.3159, -0.2043) & (-0.1605, -0.04892) \\ 
  & $\mathrm{CS}_{<10\%}$ & One-Stage & Average & \textbf{2.263} & -0.005679 & 0.1527 \\ 
  & & & 95\% CI & (1.898, 2.284) & (-0.2210, 0.01739) & (-0.06324, 0.1753) \\ 
  & & Two-Stage & Average & 2.068 & -0.06809 & 0.08888 \\ 
  & & & 95\% CI & \textbf{(1.975, 2.121)} & (-0.1438, -0.03125) & (0.01391, 0.1252) \\ 
  & $\mathrm{CS}_{<20\%}$ & One-Stage & Average & 2.166 & \textbf{0.005156} & \textbf{0.1656} \\ 
  & & & 95\% CI & (1.198, 2.201) & (-0.4517, 0.02035) & (-0.2926, 0.1801) \\ 
  & & Two-Stage & Average & 2.033 & -0.1167 & 0.04320 \\ 
  & & & 95\% CI & (1.922, 2.099) & (-0.1957, -0.07268) & (-0.03509, 0.08658) \\ 
  \hline
\end{tabular}
}
\caption{Point estimates and confidence intervals for the average out-of-sample scores (columns) for one- and two-stage forecast combinations of S\&P500 returns that optimize a variety of average in-sample scores (rows). }
\label{tbl:emp1}
\end{table}

Panel A of the table contains results for the out-of-sample period extending from January 3rd, 2017 to December 31st, 2021. When measuring forecast performance in LS terms (first column), we find that our theoretical results are reflected in the dominance of the combination optimized by the LS in a one-stage fashion; it having both the highest average out-of-sample value (first row, bold) and the narrowest confidence interval (second row, bold). Also note that the average forecast performance of the LS-based one-stage combination is above the upper bounds of the confidence intervals for the other forecast combinations along that column, lending further support to its dominance. In the second column, where the out-of-sample score is $\mathrm{CS}_{<10\%}$, we find that the LS-based one-stage combination once again has the highest average out-of-sample score (first row, bold) with the narrowest confidence interval (second row, bold), contrary to theory, which would have the $\mathrm{CS}_{<10\%}$-based one-stage combination claim those titles. Note, however, that the $\mathrm{CS}_{<10\%}$ performance of all three one-stage forecast combinations are similar, with their average out-of-sample scores lying in each other's confidence intervals. Results for the out-of-sample $\mathrm{CS}_{<20\%}$ are displayed in the third column, where the one-stage forecast combination also optimized according to the $\mathrm{CS}_{<20\%}$ has neither the best performance (ninth row) nor the narrowest confidence interval (tenth row). Like the results for $\mathrm{CS}_{<10\%}$, we see that the $\mathrm{CS}_{<20\%}$ out-of-sample scores are also similar for all one-stage combinations, with their average out-of-sample scores lying in each other's confidence intervals.

We now focus our attention on Panel B, which contains the results for the extreme period extending from January 2nd, 2020 to June 30th, 2020. Here, we find that the one-stage combination optimizing the $\mathrm{CS}_{<10\%}$ (fifth row, bold) has the highest average out-of-sample LS (first column), which also lies outside the confidence intervals along that column for the out-of-sample LS of all other combinations. A similar observation can be made regarding the out-of-sample $\mathrm{CS}_{<10\%}$ (column two) and $\mathrm{CS}_{<20\%}$ (column three). Forecast performance in terms of these out-of-sample scores is dominated by the one-stage combination optimizing the $\mathrm{CS}_{<20\%}$ (ninth row, bold), whose average out-of-sample score lies outside the confidence intervals of all other forecast combinations for columns two and three, with the exception of the confidence intervals for the one-stage combination optimizing the $\mathrm{CS}_{<10\%}$.

\begin{figure}[t]
\includegraphics[width=\textwidth]{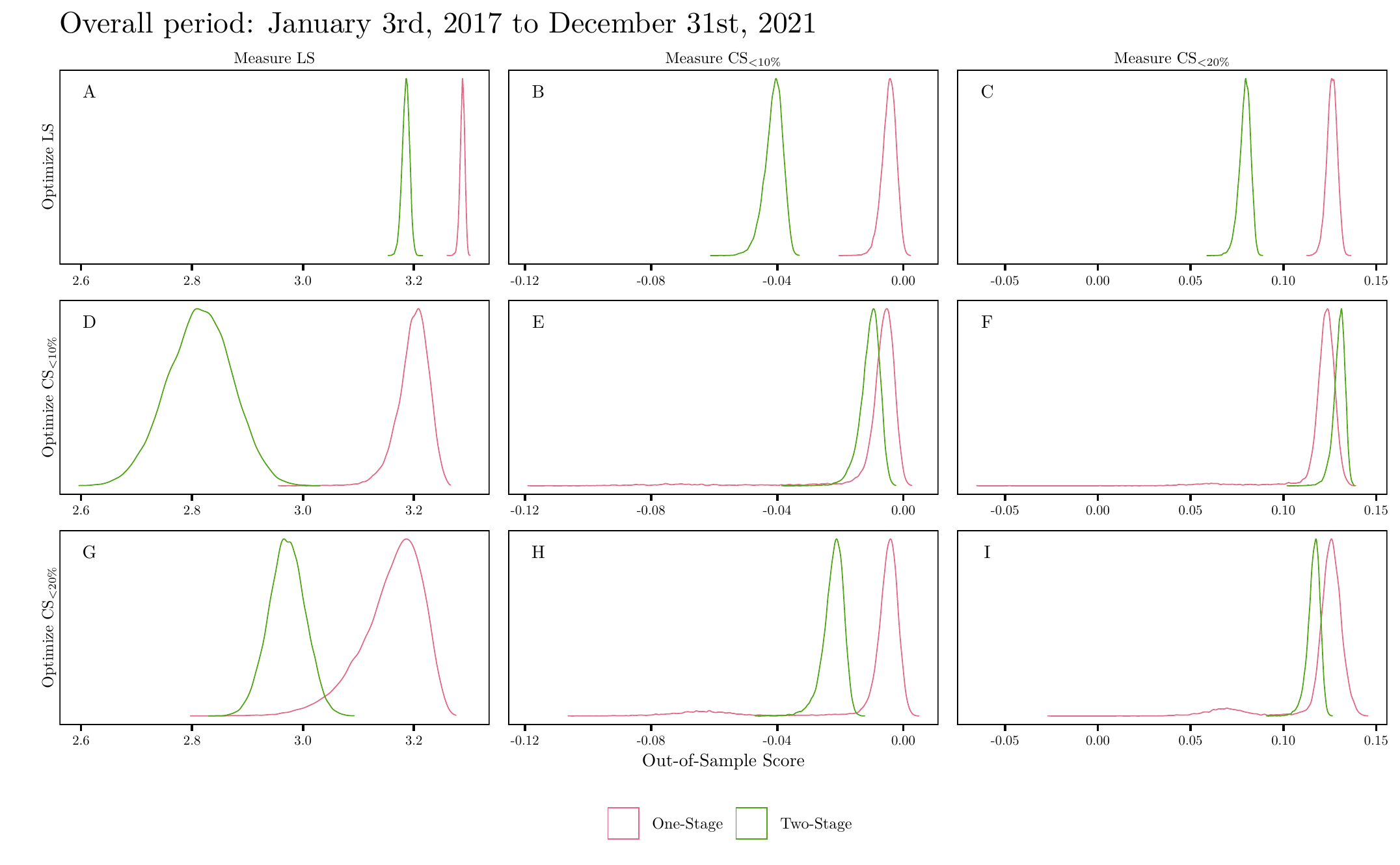}
\caption{Sampling distributions capturing the variability of average overall-period scores (columns) due to the in-sample one-stage estimation (red) and two-stage estimation (green) of forecast combinations of S\&P500 returns according to a variety of in-sample scores (rows).}
\label{fig:emp2}
\end{figure}

\begin{figure}[t]
\includegraphics[width=\textwidth]{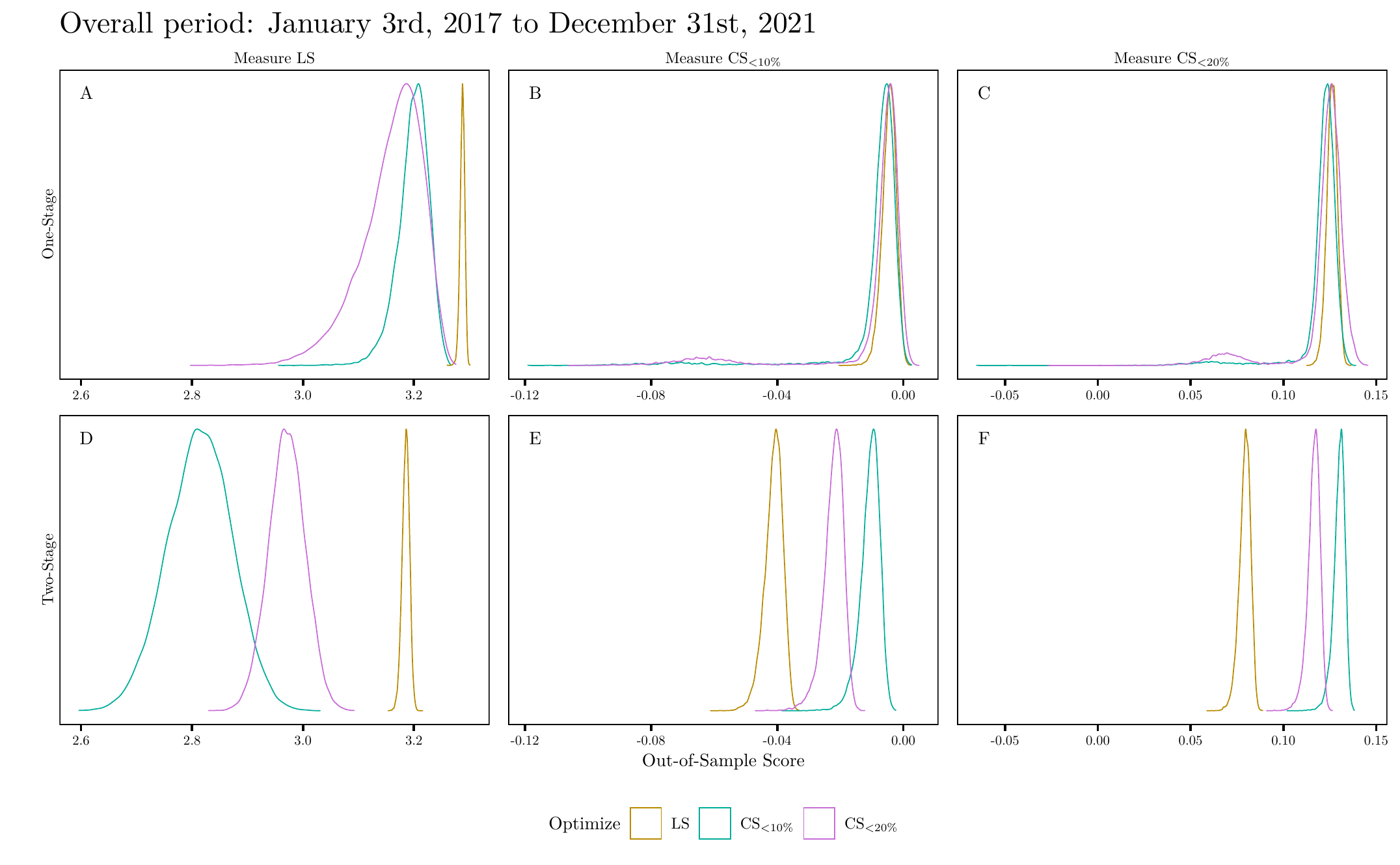}
\caption{Sampling distributions capturing the variability of average overall-period scores (columns) due to the in-sample one-stage estimation (first row) and two-stage estimation (second row) of forecast combinations of S\&P500 returns according to a variety of in-sample scores (colors).}
\label{fig:emp3}
\end{figure}

In Figures \ref{fig:emp2} to \ref{fig:emp5}, we seek to shed a different light on the results in Table \ref{tbl:emp1}, with particular attention given to differences between results based on one- and two-stage estimation. Figure \ref{fig:emp2} displays, for the overall period, the kernel density estimates for the average out-of-sample score (columns), for one-stage and two-stage forecast combinations (colors) optimized according to the LS, $\mathrm{CS}_{<10\%}$, and $\mathrm{CS}_{<20\%}$ (rows). In the diagonal Panels A, E and I, where the in-sample and out-of-sample scores are the same, the sampling distributions for the out-of-sample forecast performance of the one-stage combination (red) is higher than (that is, to the right of) the sampling distribution for the two-stage combination (green), reflecting Theorem \ref{thm:one}, Part 1. For all other panels with the exception of Panel F, the one-stage combination also has a higher forecast performance than its two-stage counterpart, despite the mismatch between the estimation and measurement criterion.

Figure \ref{fig:emp3} displays the same results as in Figure \ref{fig:emp2}, but rearranged, for the purpose of addressing the extent to which our theory is reflected in a comparison between forecast combinations produced by different scores. The rearrangement is such that rows now delineate between one- and two-stage combination, and colors indicate the score used to produce the forecast combination. Theorem \ref{thm:one} is illustrated in Panel A, where we see that optimizing according to the LS (gold) leads to a one-stage combination with an out-of-sample LS being higher (to the right of), and with less sampling variability (a narrower kernel density), relative to optimizing either of the censored log scores (teal and purple). The same is observed for two-stage estimation in Panel D, only now the $\mathrm{CS}_{<20\%}$-based estimator outperforms its $\mathrm{CS}_{<10\%}$-based counterpart. On the other hand, sampling distributions for the out-of-sample scores of the one-stage forecast combinations in Panels B and C are similar in both location and spread, with substantial overlap. Looking again at the bottom row, the two-stage combination that optimizes the $\mathrm{CS}_{<10\%}$ dominates out-of-sample forecast performance in terms of both the $\mathrm{CS}_{<10\%}$ (Panel E) and the $\mathrm{CS}_{<20\%}$ (Panel F), when compared to other two-stage combinations. Also note that the performances of combinations optimized according to the three different scores are more distinct for two-stage combinations in the bottom row, than for the one-stage combinations on the top row. Further, with the exception of the small LS sampling variability for the LS-optimizing combination in Panels A and D of the first column, the sampling variability of the out-of-sample score is similar (similar kernel density widths) for all three optimized scores (colors) within each panel. One caveat to this interpretation is the bi-modality and long tail of the out-of-sample score for the one-stage combination (first row) optimizing the $\mathrm{CS}_{<10\%} $ (teal) and $\mathrm{CS}_{<20\%}$ (purple), especially in Panels B and C.

\begin{figure}[t]
\includegraphics[width=\textwidth]{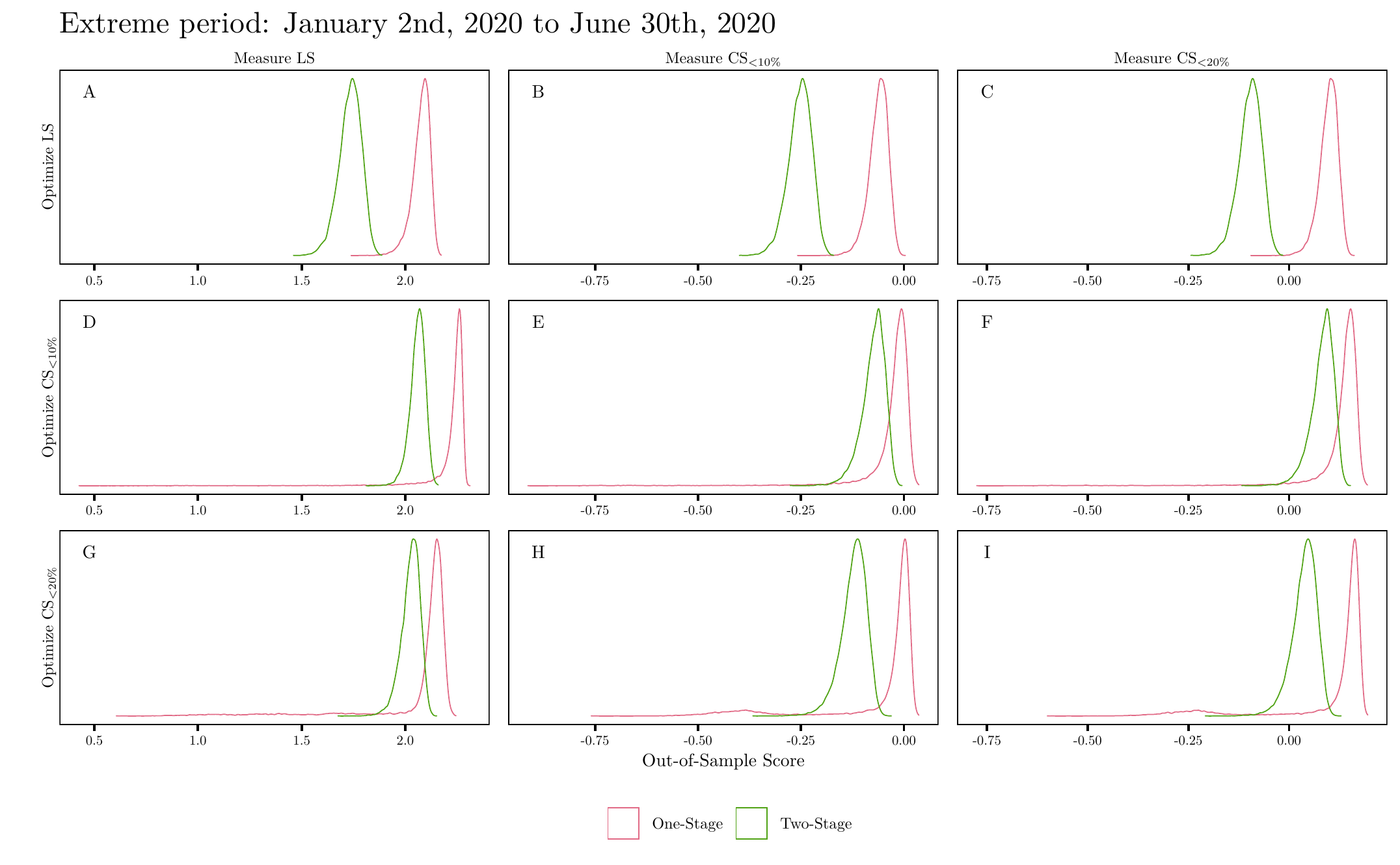}
\caption{Sampling distributions capturing the variability of average extreme-period scores (columns) due to the in-sample one-stage estimation (red) and two-stage estimation (green) of forecast combinations of S\&P500 returns according to a variety of in-sample scores (rows).}
\label{fig:emp4}
\end{figure}

\begin{figure}[t]
\includegraphics[width=\textwidth]{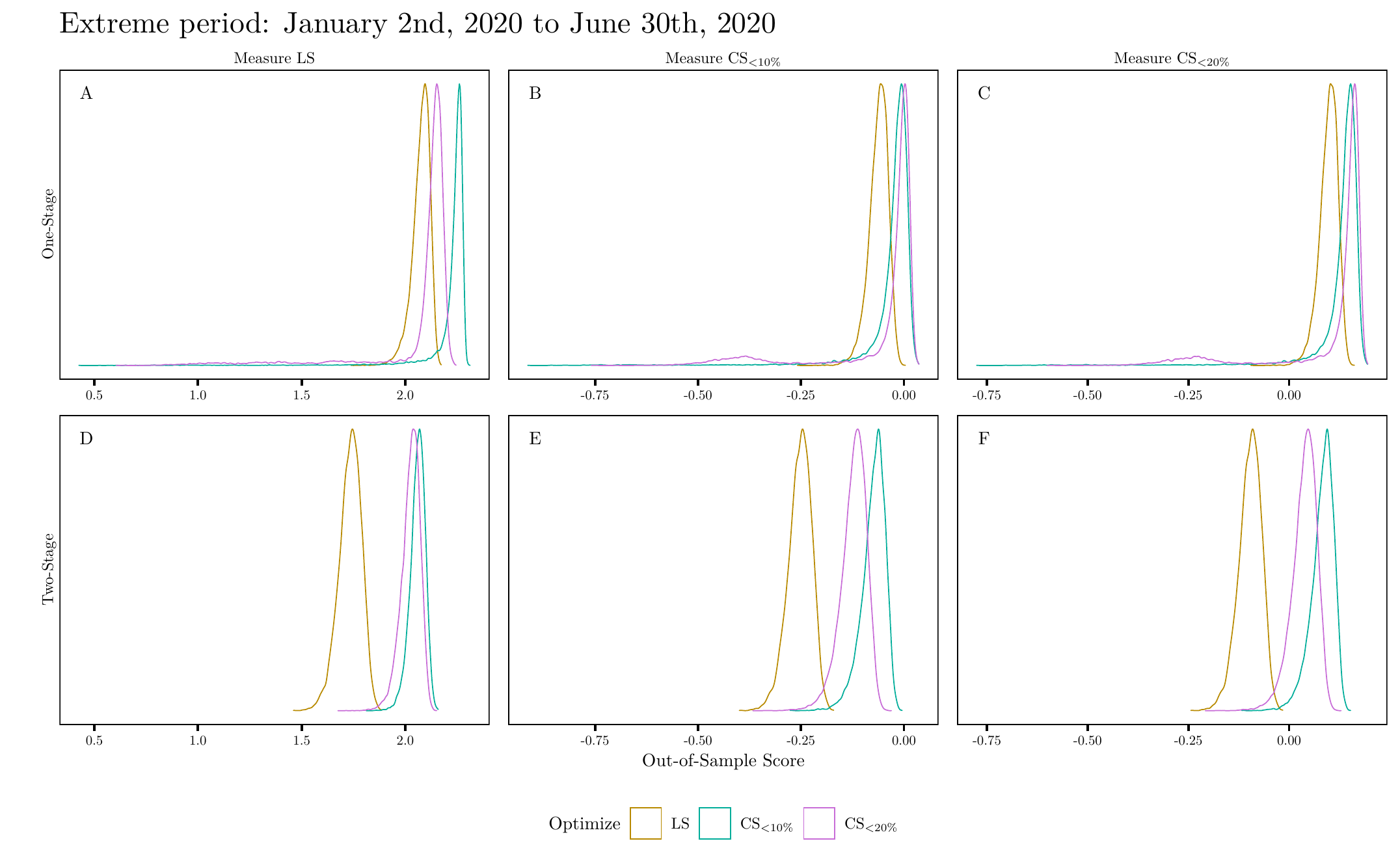}
\caption{Sampling distributions capturing the variability of average extreme-period scores (columns) due to the in-sample one-stage estimation (first row) and two-stage estimation (second row) of forecast combinations of S\&P500 returns according to a variety of in-sample scores (colors).}
\label{fig:emp5}
\end{figure}

Remarkably, the dominance of the one-stage forecast combinations over their two-stage counterparts persists as we move from the overall period to the extreme period, and is now uniform. This is reflected in Figure \ref{fig:emp4}, which shows a higher sampling distribution for the average out-of-sample scores of one-stage combinations (red) relative to their two-stage counterparts (green), regardless of how out-of-sample performance is measured (columns) or which in-sample score is optimized (rows). The asymptotic superiority of the one-stage approach implied by Theorem \ref{thm:one} is therefore reflected in our finite-sample results even as volatility increases and the size of the out-of-sample period declines.

Finally, Figure \ref{fig:emp5} addresses, for the extreme period, the effect of optimizing for the different scores (colors). There are two key conclusions to draw from a comparison of the results for the extreme period in Figure \ref{fig:emp5} and the results for the overall period in Figure \ref{fig:emp3}. First, relative to the corresponding panel in Figure \ref{fig:emp3}, the sampling distributions for the out-of-sample scores for the extreme period in Figure \ref{fig:emp5} are both lower (that is, further to the left) and broader than those for the overall period, which is reflected primarily in the different scales on the $x$-axes of corresponding panels. Second, the LS-optimizing forecast combinations (gold) now underperforms in terms of all out-of-sample measures (columns), relative to the $\mathrm{CS}_{<10\%}$- and $\mathrm{CS}_{<20\%}$-optimizing forecast combinations (teal and purple, respectively), whether optimizing in a one-stage (first row) or two-stage (second row) fashion. In the extreme period, optimizing according to a censored log score \textit{always} leads to a better out-of-sample performance, whichever measure of performance is used.

\section{Conclusion\label{sec:conclusion}}

In this paper, we have compared the forecast performance of a variety of methods for estimating the parameters of a forecast combination. Forecast performance is measured according to an expected out-of-sample scoring rule, and we take into account the sampling variability of this measure of forecast performance that arises due to sampling variability in the estimation of the forecast combination parameters -- parameters that underlie the constituent models as well as the parameters of the combination function. In this context, we analyze and compare the forecast performance of one- and two-stage forecast combinations produced according to either the score used to measure forecast performance, or some other score.

We have found via standard asymptotic arguments that for traditional two-stage forecast combinations, uncertainty in forecast performance is dominated by uncertainty in the estimation of the constituent models. Regarding the forecast combination puzzle, this lends support to optimal two-stage combinations over their equally weighted counterparts, since the limiting performance of the former is necessarily at least as high as the latter, with no added limiting sampling variability. Unfortunately, the standard practice in the forecast combinations literature is to hold the constituent models fixed when studying the forecast performance consequences of different estimation methods. Hence, this dominant source of uncertainty is often neglected in practice, including in studies of the forecast combination puzzle. Additionally, we show that producing one-stage forecast combinations according to a measure of forecast performance typically results in this measure having a higher limiting value and a lower limiting sampling variability, relative to two-stage forecast combinations. Alongside this result, we also provide formulae for producing estimates for the sampling distribution of the forecast performance of one- and two-stage
forecast combinations.

A simulation exercise confirms the theoretical results on all fronts, across multiple measures of performance, and for sample sizes typically observed in practice. In particular, knowing the limiting combination weights before observing the data does not appreciably change the forecast performance of the two-stage combination, provided that the performance measure and the optimization criterion coincide. Further, one-stage forecast combinations always beats two-stage forecast combinations as long as the sample size is large, even when assessed according to a scoring rule that differs from that used to produce the forecast combination.

In our empirical exercise, forecast combinations of S\&P500 returns continue to provide strong support for the benefit of one-stage forecast combinations over two-stage forecast combinations, most notably in the period of pandemic-induced high volatility in the first half of 2020. Moreover, in this latter period, both one-stage \textit{and} two-stage combinations optimized according to a tail-based scoring rule outperform the corresponding combinations optimized according to the log score, no matter what measure of out-of-sample performance is used.

We see several opportunities for further investigation. First, there is no doubt that this work has implications for other contexts where two-stage estimation is commonly employed, other than (distributional) forecast combinations. We expect that the advice given above on the source of forecast performance sampling variability in the two-stage cases, and the superior performance of one-stage forecast combinations, will apply equally to those contexts. Secondly, we are interested to discover the implications of these results for tests of predictive ability (e.g.\ \citealp{Diebold1995}; \citealp{Hansen2005}; \citealp{Giacomini2006}), especially within the context of the forecast combination puzzle. Thirdly, we would like to explore whether our estimates for the sampling distribution of forecast performance derived in Sections \ref{sec:implications} and \ref{subsec:empdatacomb} can be improved upon by a more sophisticated bootstrap-based methodology.

Finally, we leave forecasters with the following \textit{guidelines}:

\smallskip 

\noindent $\bullet$ Where possible, consider using one-stage forecast combinations instead of the standard two-stage approach. We also advocate for producing forecast combinations by optimizing the measure of forecast performance most important for the problem at hand, and for the horizon to be forecast by the combination.

\smallskip 

\noindent $\bullet$ When comparing the performance of competing forecast combinations, consider the impact that sampling variability in the parameter estimates has on the sampling variability of the forecast performance measure. In particular, it is important to consider the impact of sampling variability in \textit{all} estimated parameters of a combination, including those that govern the constituent models. For combinations produced in two-stages, sampling variability in the estimation of the constituent model parameters can dominate overall forecast performance; hence, ignoring this component of sampling variability may lead to misleading conclusions about the benefits, or otherwise, of optimizing the combination weights.

\smallskip 

\noindent $\bullet$ When forecasting during times of high volatility, consider using a scoring rule that focuses on forecast accuracy in the tails, rather than relying only on the log score.

\if0\blind
{
\section*{Acknowledgments}
We would like to thank Eric Eisenstat, Rob Hyndman, Mervyn Silvapulle and Farshid Vahid-Araghi for their thoughtful comments, which have been of great benefit to the quality of our article.

\section*{Disclosure Statement}
The authors report there are no competing interests to declare.
}\fi

{
\footnotesize
\bibliographystyle{ECA_jasa}
\bibliography{library}
}

\newpage

\pagenumbering{arabic}

\if0\blind
{
  \spacingset{1}
  \setcounter{footnote}{0}
  \renewcommand{\thefootnote}{\fnsymbol{footnote}}
  \maketitle
  \footnotetext[1]{\funding}
  \footnotetext[2]{\ebsaffiliation}
  \footnotetext[3]{\mdaffiliation}
  \footnotetext[4]{\correspondingauthor}
  \renewcommand{\thefootnote}{\arabic{footnote}}
  \spacingset{1.5}
} \fi

\if1\blind
{
  \bigskip
  \bigskip
  \bigskip
  \begin{center}
    {\Large\bf The Impact of Sampling Variability on Estimated Combinations of Distributional Forecasts}
  \end{center}
  \smallskip
} \fi

\appendix

\section*{\centering Appendix: Theoretical Results}
\setcounter{section}{1}

The appendix contains technical details supporting the theory, simulations and empirical exercises developed in the main paper. \appendixname\ \ref{subsec:regularity} provides standard regularity conditions assumed in the theorems of Section \ref{sec:implications}. In \appendixname\ \ref{subsec:gmm}, we provide a GMM representation of two-stage forecast combinations and provide expressions for the asymptotic sampling distributions of the one- and two-stage parameter estimates. The structure of the asymptotic covariance matrix for the two-stage parameters are further explored in \appendixname\ \ref{subsec:gmmasycov}. The proofs for Theorems \ref{thm:one} and \ref{thm:two} can be found in \appendixname\ \ref{subsec:proofs}.

\if0\blind
{
\newpage
} \fi

\setcounter{equation}{0}
\renewcommand{\theequation}{\thesection.\arabic{equation}}


\subsection{Regularity Conditions\label{subsec:regularity}}

\begin{assumption}
\label{ast:exp}
The parameter space $\Theta$ is compact. There exists a real-valued deterministic function $\mathcal{S}_0(\theta) \equiv \mathcal{S}_{0}(\eta, \gamma)$, continuous on $\Theta$, such that the following are satisfied.

\begin{enumerate}
\item $\sup_{\theta\in\Theta} \| \mathcal{S}_{n}(\theta) - \mathcal{S}_0(\theta) \| = o_p(1)$.
\item There exists a unique vector $\theta^0 \coloneqq \argmax_{\theta \in \Theta} \mathcal{S}_0(\theta)$.
\end{enumerate}
\end{assumption}

\begin{assumption}
\label{ast:exp2}
For each $j=1, \ldots, K$, there exists a real-valued deterministic function $\mathcal{S}_0(\gamma_j)$, continuous on $\Gamma_j$,
such that the following are satisfied.

\begin{enumerate}
\item $\sup_{\gamma_j\in\Gamma_j} \| \mathcal{S}_{n}(\gamma_j) - \mathcal{S}_0(\gamma_j) \| = o_p(1)$.
\item There exists a unique vector $\gamma^{\star}_j \coloneqq \argmax_{\gamma_j \in \Gamma_j} \mathcal{S}_0(\gamma_j)$.
\item There exists a unique vector $\eta^{\star} \coloneqq \argmax_{\eta \in \mathcal{E}} \mathcal{S}_0(\eta, \gamma^{\star})$, where $\gamma^{\star} \coloneqq [ \gamma^{\star \prime}_{1}\ \cdots\ \gamma^{\star \prime}_{K} ]^{\prime}$.
\end{enumerate}
\end{assumption}

\begin{remark}
Assumptions \ref{ast:exp} and \ref{ast:exp2} give the requisite regularity conditions to ensure consistency of $\hat{\theta_{n}}$ and $\tilde{\theta_{n}}$; the proofs of which are standard and hence omitted for brevity.
\end{remark}

Furthermore, we assume the following, which, when satisfied, will allow us to deduce the asymptotic distributions of $\hat\theta_{n}$ and $\tilde{\theta_{n}}$.

\begin{assumption}
\label{ast:dist}
The following are satisfied.

\begin{enumerate}
\item $\theta^0, \theta^{\star} \in \mathrm{Int}(\Theta)$.
\item The functions $\mathcal{S}_0(\theta)$ and $\mathcal{S}_{n}(\theta)$ are twice continuously differentiable on $\text{Int}(\Theta)$.
\item The functions $q_{n}(\eta,\gamma) \coloneqq \partial \mathcal{S}_{n}(\eta,\gamma)/\partial\theta$ and $\tilde{g}_{n}(\eta,\gamma)$, defined in \appendixname\ \ref{subsec:gmm}, exist and satisfy the following.
\begin{enumerate}
\item $\sqrt{n} q_{n}(\eta^0, \gamma^0) \Rightarrow N(0, V^0)$ and $\sqrt{n} \tilde{g}_{n}(\eta^\star, \gamma^\star) \Rightarrow N(0, V^\star)$.

\item There exist matrices $M^0(\theta) \coloneqq \partial^2 \mathcal{S}_0(\theta)/\partial \theta \partial \theta^{\prime}$ and $M^\star(\theta)$, nonsingular at $\theta^0$ and $\theta^\star$, respectively, such that for any nonnegative sequence $\delta_n \to 0$, 
\begin{align*}
\sup_{\| \theta - \theta^\star \| \leq \delta_n} \| \partial \tilde{g}_{n}(\theta)/\partial\theta^{\prime} - M^\star(\theta) \| &= o_p(1), \\
\sup_{\| \theta - \theta^0 \| \leq \delta_n} \| \partial {q}_{n}(\theta) / \partial \theta^{\prime}-M^0(\theta)\| &= o_p(1).
\end{align*}
\end{enumerate}
\end{enumerate}
\end{assumption}

\subsection{GMM Representation of One- and Two-Stage Forecast Combinations\label{subsec:gmm}}

Consider the derivatives $g_{n}(\eta, \gamma) \coloneqq  \partial \mathcal{S}_{n}(\eta, \gamma)/\partial \eta$ and $m_{j,n}(\gamma_j) \coloneqq \frac{1}{n} \sum_{t = 1}^{n} \partial S(F^{\gamma_j}_{j,t}, Y_{t})/\partial \gamma_j$, and stack $m_{j,n}(\gamma_j)$ to obtain $m_{n}(\gamma) \coloneqq [ m_{1,n}(\gamma_1)^{\prime}\ \cdots\ m_{K,n}(\gamma_K)^{\prime} ]^{\prime}$. Since $\tilde{\eta}_{n}$ maximizes $\mathcal{S}_{n}(\eta, \tilde{\gamma}_n)$ and $\tilde{\gamma}_{j,n}$ maximizes $\frac{1}{n} \sum_{t = 1}^{n} S(F^{\gamma_j}_{j,t}, Y_{t})$, we have that, under standard regularity conditions, $\eta = \tilde{\eta}_{n}$ solves 
\begin{equation}
g_{n}(\eta, \tilde{\gamma}_{n}) = 0,  \label{eqn:eta2s}
\end{equation}
with probability approaching one, and $\gamma = \tilde{\gamma}_{n}$ solves 
\begin{equation}
m_{n}(\gamma) = \begin{bmatrix} m_{1,n}(\gamma) \\ \vdots \\ m_{K,n}(\gamma) \end{bmatrix} = 0,  \label{eqn:gamma2s}
\end{equation}
with probability approaching one. Stacking $g_{n}$ and $m_{n}$ as $\tilde{g}_{n}$, we see that \eqref{eqn:eta2s} and \eqref{eqn:gamma2s} form the joint moment equation 
\begin{equation*}
\tilde{g}_{n}(\tilde{\eta}_{n}, \tilde{\gamma}_{n}) \coloneqq \begin{bmatrix} g_{n}(\eta, \gamma) \\ m_{n}(\gamma) \end{bmatrix} \bigg{|}_{\eta = \tilde{\eta}_{n}, \gamma = \tilde{\gamma}_{n}}.
\end{equation*}
The two-stage estimator in \eqref{eqn:est2s2} is therefore a GMM estimator based on the moment function $\tilde{g}_{n}$. See \citesupp{Newey1994} for further details regarding GMM.

Now let $V^0$ and $V^{\star}$ be the asymptotic covariance matrices of the scaled gradient vectors $\sqrt{n} \partial \mathcal{S}_n(\theta^0) / \partial \theta$ and $\sqrt{n} \tilde{g}_n(\eta^{\star}, \gamma^{\star})$ of the one- and two-stage estimators, respectively. Given the assumptions in \appendixname\ \ref{subsec:regularity}, we have, via the usual arguments, 
\begin{align}
\sqrt{n}(\hat{\theta}_n - \theta^0) & \Rightarrow N(0, W^0),  \notag \\
W^0 & \coloneqq [M^0]^{-1} V^0 [M^0]^{{-1}^{\prime}},  \label{eqn:onestagew}
\end{align}
and 
\begin{align*}
\sqrt{n}(\tilde{\theta}_n - \theta^{\star}) & \Rightarrow N(0, W^{\star}), \\
W^{\star} & \coloneqq [M^{\star}]^{-1} V^{\star} [M^{\star}]^{{-1}^{\prime}},
\end{align*}
where $M^0 \coloneqq \mathrm{plim}_n \partial q_n(\theta^0) / \partial \theta^{\prime}$ and $W^{\star} \coloneqq \mathrm{plim}_n \partial \tilde{g}_n(\theta^{\star}) / \partial \theta^{\prime}$. The asymptotic covariance matrix $W^{\star}$ of the two-stage estimator $\tilde{\theta}_n$ has a specific structure due to its two-stage nature. See \appendixname\ \ref{subsec:gmmasycov} for details.

\subsection{Structure of the Asymptotic Covariance Matrix of Two-Stage Forecast Combination Parameter Estimates\label{subsec:gmmasycov}}

Recall that the asymptotic sampling distribution of the two-stage forecast combination parameters estimates is 
\begin{align*}
\sqrt{n}(\tilde{\theta}_n - \theta^{\star}) & \Rightarrow N(0, W^{\star}), \\
W^{\star} & \coloneqq [M^{\star}]^{-1} V^{\star} [M^{\star}]^{-1 \prime},
\end{align*}
where $V^{\star}$ is the asymptotic covariance matrix of the normalized gradient $\sqrt{n} \tilde{g}_{n}(\eta^{\star}, \gamma^{\star})$, and $M^{\star} \coloneqq \mathrm{plim}_n \partial \tilde{g}_{n} (\theta^{\star}) / \partial \theta$. See \appendixname\ \ref{subsec:gmm} for the definition of $\tilde{g}_n$, and for the definitions of $g_n$ and $m_n$, used below.

The matrix $M^\star$ has a particular structure, namely, denoting $G_{\eta} \coloneqq \mathrm{plim}_n \partial g_{n}(\theta^\star) / \partial \eta$, $G_{\gamma} \coloneqq \mathrm{plim}_n \partial g_{n}(\theta^{\star}) / \partial \gamma$, and $M_{\gamma} \coloneqq \mathrm{plim}_n \partial m_{n}(\gamma^{\star}) / \partial \gamma^{\prime}$, we have 
\begin{equation*}
M^{\star} \coloneqq
\begin{bmatrix}
G_{\eta} & G_{\gamma} \\ 
0 & M_{\gamma}
\end{bmatrix},
\end{equation*}
so that 
\begin{equation*}
[M^\star]^{-1} \coloneqq 
\begin{bmatrix}
G^{-1}_{\eta} & -G_{\eta}^{-1}G_{\gamma} M^{-1}_{\gamma} \\ 
0 & M^{-1}_{\gamma} \end{bmatrix}.
\end{equation*}
Using this, we can conclude that (see, e.g., Theorem 6.1 in \citealpsupp{Newey1994} for details) 
\begin{gather}
\sqrt{n}(\tilde{\eta}_{n} - \eta^\star) \Rightarrow N\left(0, W^{\star}_{\eta}\right), \notag \\
W^{\star}_{\eta} \coloneqq G_{\eta}^{-1} V^{\star}_{\eta} G_{\eta}^{-1\prime}, \notag \\
V^{\star}_{\eta} \coloneqq \lim_{n \to \infty} \text{Var} \left[ \sqrt{n} \left\{ g_{n}(\theta^{\star}) + G_{\gamma} M^{-1}_{\gamma} m_{n}(\gamma^{\star}) \right\} \right]. \label{eqn:twostagev}
\end{gather}

Moreover, comparing \eqref{eqn:onestagew} and \eqref{eqn:twostagev}, we see that ignoring the first stage in estimating the asymptotic distribution of the second-stage estimator, $\tilde{\eta}_n$, is equivalent to assuming $\sqrt{n} G_{\gamma} M^{-1}_{\gamma} m_n(\gamma^{\star}) = o_p(1)$. Thus, ignoring the first stage leads to valid second-stage inference if $G_{\gamma} = 0$. Assuming we may differentiate inside the expectation, the implicit function theorem ensures the existence of a unique function $\eta^0(\gamma) \coloneqq \argmax_{\eta \in \mathcal{E}} \mathcal{S}_0(\eta, \gamma)$, defined in a neighborhood of $\gamma^{\star}$, such that 
\begin{equation*}
\frac{\partial \eta^0(\gamma^{\star})}{\partial \gamma^{\prime}} = -G_{\eta}^{-1} G_{\gamma}.
\end{equation*}
Now consider the condition $\partial \eta^{0} (\gamma^{\star}) / \partial \gamma^{\prime}$ = 0. Under our regularity conditions, this is equivalent to the expected-score-maximizing second-stage parameter vector $\eta^{\star} = \eta^0(\gamma)$ being constant in a neighborhood of $\gamma^{\star}$ (Theorem 6.2, \citealpsupp{Newey1994}). In this case, we would have $G_{\gamma} = 0$, and ignoring the first stage would lead to valid second stage inference.

In general, we see that ignoring first-stage estimation will deliver invalid inference on second-stage parameters, excepting cases where the optimal value of the first-stage parameters, $\gamma^{\star}$, has no impact on the optimal value of the combination weights, $\eta^{\star}$.

\subsection{Proofs\label{subsec:proofs}}

\begin{proof}[Proof of Theorem \ref{thm:one}]
The first implication follows directly from Assumptions \ref{ast:exp} and \ref{ast:exp2}, and the definition of $\tilde\theta_{n}$ and $\hat{\theta}_{n}$. The second result is a consequence of the two-stage nature of $\tilde\theta_{n}$ and a first-order Taylor series expansion. In particular,
\begin{align*}
\sqrt{n} & \{ \mathcal{S}_0(\tilde{\eta}_{n},\tilde{\gamma}_{n}) - \mathcal{S}_0(\eta^{\star},\gamma^{\star}) \} \\
= & \left[{\partial \mathcal{S}_0(\eta^{\star}, \gamma^{\star})}/{\partial \eta}\right]^{\prime} \sqrt{n}(\tilde{\eta}_{n} - \eta^{\star}) \\
& + \left[{\partial \mathcal{S}_0(\eta^{\star}, \gamma^{\star})}/{\partial \gamma}\right]^{\prime} \sqrt{n} (\tilde{\gamma}_{n} - \gamma^{\star}) \\
& + \left[{\partial \mathcal{S}_0(\bar{\theta}_{n})}/{\partial \theta} - {\partial \mathcal{S}_0(\theta^{\star})}/{\partial \theta} \right]^{\prime} \sqrt{n}(\tilde{\theta}_{n} - \theta^{\star}) \\
= & [{\partial \mathcal{S}_0(\eta^\star, \gamma^\star)}/{\partial \gamma}]^{\prime} \sqrt{n}(\tilde{\gamma}_{n} - \gamma^{\star}) + o_p(1)
\end{align*}
for $\| \bar\theta_{n} - \theta^{\star} \| \leq \| \tilde{\theta}_{n} - \theta^{\star} \| = \mathcal{O}_p(n^{-1/2})$. By Assumptions \ref{ast:exp2}.3, \ref{ast:dist}.1 and \ref{ast:dist}.2, $\partial \mathcal{S}_0(\eta^{\star}, \gamma^{\star})/\partial \eta = 0$ and the term on the second line disappears. By Assumption \ref{ast:dist}.2, the term on the fourth line converges to zero in probability. Similarly,
\begin{align*}
\sqrt{n} & \{ \mathcal{S}_0(\eta^{\star}, \tilde{\gamma}_{n}) - \mathcal{S}_0(\eta^{\star}, \gamma^{\star}) \} \\
= & \left[{ \partial \mathcal{S}_0(\eta^{\star}, \bar{\gamma}_{n})}/{\partial \gamma} \right]^{\prime} \sqrt{n}(\tilde{\gamma}_{n} - \gamma^{\star}) \\
= & \left[{ \partial \mathcal{S}_0(\eta^{\star}, \gamma^{\star})}/{\partial \gamma}\right]^{\prime}\sqrt{n}(\tilde{\gamma}_{n}-\gamma^{\star}) + o_p(1)
\end{align*}
for $\lVert \bar{\gamma}_n - \gamma^{\star} \rVert \leq \lVert \tilde{\gamma}_n - \gamma^{\star} \rVert = \mathcal{O}_p(n^{-1/2})$. Subtracting the two expansions yields the result.
\end{proof}

\begin{proof}[Proof of Theorem \ref{thm:two}, Part 1.]
Recall that $\sqrt{n}(\tilde{\theta}_{n} - \theta^{\star}) \Rightarrow N(0, W^{\star})$, and apply the first-order delta method, using Assumption \ref{ast:dist}.2 to control the error. The asymptotic covariance matrix follows from $\partial \mathcal{S}_0(\eta^{\star}, \gamma^{\star})/\partial \eta = 0$ and some block matrix manipulation.
\end{proof}

\begin{proof}[Proof of Theorem \ref{thm:two}, Part 2.]
By Assumptions \ref{ast:exp}.2, \ref{ast:dist}.1 and \ref{ast:dist}.2, we have $\partial \mathcal{S}_0(\theta^0) / \partial \theta = 0$. Now recall that $\sqrt{n}(\hat{\theta}_{n} - \theta^0) \Rightarrow X$, so we apply the second-order delta method using a second-order Taylor series expansion of $\mathcal{S}_0(\hat{\theta}_n)$ around $\theta = \theta^0$. Use Assumption \ref{ast:dist}.3 (b) to control the error.
\end{proof}

{
\footnotesize
\bibliographystylesupp{ECA_jasa}
\bibliographysupp{library}
}

\end{document}